\documentclass[aps,prr,twocolumn,amsfonts, amssymb, amsmath,showpacs,superscriptaddress,showkeys, nofootinbib,longbibliography]{revtex4-1}
\AtBeginDocument{%
\newwrite\bibnotes
\def\bibnotesext{Notes.bib}
\immediate\openout\bibnotes=\jobname\bibnotesext
\immediate\write\bibnotes{@CONTROL{REVTEX41Control}}
\immediate\write\bibnotes{@CONTROL{%
apsrev41Control,author="08",editor="1",pages="1",title="0",year="1"}}
 \if@filesw
 \immediate\write\@auxout{\string\citation{apsrev41Control}}%
\fi
}%
\usepackage[T1]{fontenc}
\usepackage[english]{babel}
\usepackage[utf8]{inputenc}
\usepackage{url}
\usepackage[usenames,dvipsnames]{xcolor}
\usepackage{float}
\usepackage{amsthm}
\newtheorem{proposition}{Proposition}
\newtheorem{lemma}{Lemma}
\usepackage[colorlinks=true, citecolor=blue,urlcolor=blue,linkcolor=blue,filecolor=black]{hyperref}
\usepackage{mathtools}
\usepackage{physics}
\usepackage{graphicx}

\newcommand{\beq}{\begin{equation}}
\newcommand{\eeq}{\end{equation}}
\renewcommand{\emph}{\textit}
\renewcommand{\tr}{\text{Tr}}

\begin{document}

\title{Benchmarking Optical Receivers for Quantum Communication and Randomness Certification}

\author{Hamid Tebyanian}
\email{h.tebyanian@qmul.ac.uk}
\affiliation{School of Physical and Chemical Sciences, Queen Mary University of London, London, E1 4NS, UK}

\begin{abstract}
The choice of optical receiver determines which properties of the transmitted states remain visible in the observed data and therefore affects the performance of different quantum protocols. We compare continuous-variable, photon-counting and hybrid receivers within the same prepare-and-measure framework, using semi-device-independent randomness generation as the main case study. The measurement device is left uncharacterised, while the source is described by the Gram matrix of its pure signal states using an energy-derived overlap constraint, a magnitude-Gram benchmark or the full complex Gram matrix of a certified coherent phase-shift-keyed constellation. Within this framework, the observed receiver statistics are used to bound $H_{\min}(B|X,\Lambda)$ against classical side information correlated with the measurement device but independent of the input. For a fixed Gram matrix, this bound is obtained from an exact semidefinite program, with complex multi-input cases treated in block-real form and checked through the corresponding dual certificate. Photon counting alone is phase blind for fixed-modulus phase encoding and therefore certifies no worst-case randomness. Continuous-variable receivers give the highest certified entropy at moderate energy, while under the nominal source calibration a hybrid receiver performs better at low energy when the beacon-region label is retained in the output. The same receiver statistics also provide receiver-level comparisons for discrete-modulated continuous-variable quantum key distribution, quantum reading, covert communication and quantum-signature verification, without replacing the full security analysis required for each protocol.
\end{abstract}

\maketitle

\section{Introduction}
Optical receivers determine how effectively non-orthogonal states can be distinguished and therefore set the receiver-dependent performance of prepare-and-measure quantum protocols~\cite{Barnett2009,Chefles2000}. The same indistinguishability underlies quantum key distribution~\cite{Gisin2002} and quantum random number generation. In the latter case, bounding what an adversary can guess reduces to bounding how well the preparations can be discriminated under the assumed device constraints. Classical generators cannot provide information-theoretic unpredictability, since a deterministic process is predictable to an adversary who knows its internal state. Quantum generators~\cite{Ma2016,HerreroCollantes2017,Mannalath2023} span a spectrum of trust: device-dependent implementations assume the full inner working of source and detector; device-independent (DI) protocols certify randomness from loophole-free Bell violations alone~\cite{Bell1964,Clauser1969,Pironio2010,Acin2016}, but the required violations~\cite{Hensen2015,Shalm2015} keep their rates far from practical. Semi-device-independent (semi-DI) schemes sit between the two~\cite{Pawlowski2011}. Existing approaches constrain the measurement~\cite{Cao2016}, the source~\cite{Wiseman2009}, a Hilbert-space dimension~\cite{Lunghi2015} or the emitted energy~\cite{VanHimbeeck2017,Rusca2019,Drahi2020}. Prepare-and-measure constructions have also obtained randomness expansion while leaving both source and measurement uncharacterised by trusting a separate testing device and recycling the input randomness~\cite{Bhavsar2026}, while integrated contextuality tests provide an experimentally demonstrated route to semi-DI randomness certification without entanglement~\cite{Genzini2026}. We use the energy assumption because a mean-photon-number cap is directly measurable with a calibrated power monitor and implies a non-trivial floor on pairwise state overlaps, restricting the behaviours available to an adversarial device.

We compare several optical measurements through the conditional laws $P(b|x)$ they induce, including the effects of binning and imperfections. Semi-DI randomness generation provides the working certification problem, with $H_{\min}(B|X,\Lambda)$ obtained through a common semidefinite-programming (SDP) treatment. The resulting receiver hierarchy also supplies receiver-level benchmarks for discrete-modulated CV-QKD, quantum reading, covert communication and quantum-signature verification, as discussed in Sec.~\ref{sec:applications}. We start with a no-go result for phase-insensitive detection:
\begin{proposition}[Phase-blindness no-go]\label{prop:phase-blind} Let the preparations be fixed-modulus coherent states $\hat\rho_x=|\alpha e^{i\phi_x}\rangle\!\langle\alpha e^{i\phi_x}|$ with $|\alpha|^2=\mu$ for all $x$, and let the measurement be any POVM whose elements are diagonal in the Fock basis, i.e.\ $\hat\Pi_b=\sum_{k=0}^{\infty}\pi_{b,k}\,|k\rangle\!\langle k|$
with $\pi_{b,k}\in[0,1]$ and $\sum_b\pi_{b,k}=1$ for all~$k$
(so that $\{\hat\Pi_b\}_b$ is a valid POVM: $\hat\Pi_b\succeq 0$
and $\sum_b\hat\Pi_b=\hat{\mathbb I}$). Then $P(b|x)=P(b)$ for all $b,x$, and consequently $H_{\min}(B|X,\Lambda)=0$ for every $\mu$ and every adversarial model (classical or quantum).
\end{proposition}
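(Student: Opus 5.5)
The plan has two parts. First I show that phase-insensitive statistics cannot depend on $x$. Then I turn that input independence into zero conditional min-entropy, using an adversary who can simply pre-program the outcome.

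\textbf{Step 1 (input independence).} Expand each preparation in the Fock basis, $|\alpha e^{i\phi_x}\rangle=e^{-\mu/2}\sum_k \frac{\alpha^k e^{ik\phi_x}}{\sqrt{k!}}|k\rangle$. Because $\hat\Pi_b$ is diagonal, only the diagonal matrix elements of $\hat\rho_x$ enter:
\begin{equation}
P(b|x)=\tr(\hat\Pi_b\hat\rho_x)=\sum_{k=0}^\infty \pi_{b,k}\,|\langle k|\alpha e^{i\phi_x}\rangle|^2=\sum_{k=0}^\infty \pi_{b,k}\,e^{-\mu}\frac{\mu^k}{k!}.
\end{equation}
The phase factor $e^{ik\phi_x}$ cancels in the modulus squared, so the right-hand side is independent of $x$. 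Call it $P(b)$.

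Equivalently, the phase-rotation unitary $e^{i\phi\hat n}$ commutes with every $\hat\Pi_b$, and all the $\hat\rho_x$ lie on a single orbit of this rotation. The Poisson series converges absolutely because $\pi_{b,k}\in[0,1]$, so exchanging the trace with the infinite sum is justified. This is the only analytic point to check, and it is trivial.

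\textbf{Step 2 (zero min-entropy).} The obstacle is not computational but definitional. $H_{\min}(B|X,\Lambda)$ is a worst case over adversarial models consistent with the observed data, so I must exhibit one model that reproduces $P(b|x)=P(b)$ and in which the adversary guesses $B$ with certainty.

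For this I use the classical strategy available to an uncharacterised measurement device. Let $\Lambda=\lambda$ be drawn with $p(\lambda)=P(\lambda)$ over the outcome alphabet. Let the device use the deterministic POVM $\hat\Pi_b^{(\lambda)}=\delta_{b\lambda}\hat{\mathbb I}$, which is diagonal in the Fock basis and is also a legitimate measurement on any source satisfying the Gram/energy constraints. This model reproduces
\begin{equation}
\sum_\lambda p(\lambda)\tr(\hat\Pi_b^{(\lambda)}\hat\rho_x)=P(b)
\end{equation}
for every $x$. Moreover $\lambda$ is independent of $x$, as the framework requires. Conditioned on $\lambda$, the outcome is $b=\lambda$ with certainty, so $P_{\rm guess}(B|X,\Lambda)=1$ and $H_{\min}=-\log_2 P_{\rm guess}=0$.

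\textbf{Step 3 (all $\mu$, all adversaries).} The construction in Step 2 never uses the value of $\mu$, so it holds for every $\mu$. It is already a feasible point of any quantum-adversary relaxation as well: quantum side information can only enlarge the adversary's set of strategies, so the guessing probability is at least $1$ there too. Hence $H_{\min}=0$ for every adversarial model, classical or quantum.

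I would also remark that the SDP optimum equals exactly this value. Its feasible set contains the trivial deterministic decomposition, and the objective is bounded above by $1$.
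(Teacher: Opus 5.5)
Your proposal is correct and follows essentially the same route as the paper. The Fock-diagonal POVM sees only the phase-independent Poisson weights $e^{-\mu}\mu^k/k!$, and the resulting input-independent data are reproduced by a classical $\Lambda$ that pre-samples $b\sim P(b)$ (your $\hat\Pi_b^{(\lambda)}=\delta_{b\lambda}\hat{\mathbb I}$), giving $P_{\rm guess}=1$; the quantum case follows because a classical adversary is a special case of a quantum one, and your extra remarks on the rotation orbit and SDP feasibility are correct but not needed.
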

\begin{proof} The diagonal statistics are $P(b|x)=\sum_k \pi_{b,k}\,p_k^{(x)}$ with $p_k^{(x)}=|\langle k|\alpha e^{i\phi_x}\rangle|^2 =e^{-\mu}\mu^k/k!$, which is independent of $\phi_x$. Hence $P(b|x)=P(b)$ for all~$x$. Since $\Lambda$ is allowed to be any classical random variable correlated with the device, the adversary may choose $\Lambda$ to be a perfectly correlated copy of the outcome~$B$ (the device internally samples $b\sim P(b)$ and stores the result in $\Lambda$ before outputting~$b$). This deterministic strategy reproduces the data with $P_{\rm guess}=1$ and $H_{\min}=0$. Since a classical adversary is a special case of a quantum one, the no-go holds a fortiori against quantum side information. \end{proof}

Proposition~\ref{prop:phase-blind} concerns certification rather than the physical origin of the outcomes. Since the statistics admit an explicit classical simulation, zero randomness remains the worst case over the adversarial model, irrespective of any randomness that may be present in the physical detector.
Phase-sensitive receivers and feed-forward hybrids instead generate input-dependent statistics and can yield non-zero certified entropy when the observed law $P(b|x)$, together with the overlap floor $\delta\ge (1-2\mu)_+$ or a stronger source model, excludes deterministic convex decompositions in the SDP. The floor, magnitude-Gram and coherent-Gram models also quantify the effect of additional source knowledge. Their certified entropies follow this ordering in every case studied, although only the binary floor and coherent model carry theorem-level worst-case status, since the three models represent distinct assumptions rather than a proved sequence of relaxations.

The guessing-probability optimisation with classical side information is an exact SDP rather than a relaxation whenever the source model fixes the Gram matrix. We solve it for the complex-Hermitian Gram of general $n$-PSK coherent constellations through the block-real embedding. This extends earlier energy-bounded and overlap-based formulations~\cite{Rusca2019,Drahi2020,Tebyanian_2021} to general multi-input constellations. Every reported number is backed by a dual feasibility certificate verified a posteriori, independently of the solver's own report, and the certificate quality is published pointwise rather than assumed; the explicit inner value of the honest realisation is reported alongside, so the gap between honest and adversarial predictability is itself part of the data. Using the same optimisation, we compare five detection families and identify a low-energy regime in which a feed-forward hybrid overtakes homodyne in certified entropy. This advantage is obtained by retaining the beacon-region label in the raw output alphabet, which preserves the phase-sensitive information available to the certification. Building on the energy-bounded demonstrations of Rusca~et~al.~\cite{Rusca2019} and Drahi~et~al.~\cite{Drahi2020}, our analysis extends the framework to a unified comparison of several optical receivers under the same adversarial model.

\section{Framework}
\label{sec:framework}
The schematic representation of the prepare-and-measure scenario is shown in Fig.~\ref{fig:semi-DI_gen}. On input $x\in\{0,\dots,n-1\}$ the preparation stage emits an optical state $\rho_x$ whose mean photon number is bounded, and the measurement stage returns an outcome $b\in\{0,\dots,d-1\}$. Since the measurement device is left uncharacterised, the certification relies only on the observed correlations $p(b|x)$ and the source constraint. $n$ is the input alphabet size, $d$ is the number of outcomes, and $d^n$ is the number of deterministic guess strings considered in the optimisation. The non-orthogonality of the prepared states prevents perfect single-shot discrimination~\cite{Caves1994} and therefore limits the predictability of the outcomes. However, this limitation leads to certified randomness only when the observed data and the source constraint exclude every deterministic convex decomposition. Input dependence of $P(b|x)$ is necessary, since input-independent data can always be reproduced by pre-sampling the outcome, while the SDP determines whether the remaining compatible behaviours are sufficiently constrained to certify non-zero randomness.

\begin{figure}[!h]
\centering
\includegraphics[width=\linewidth]{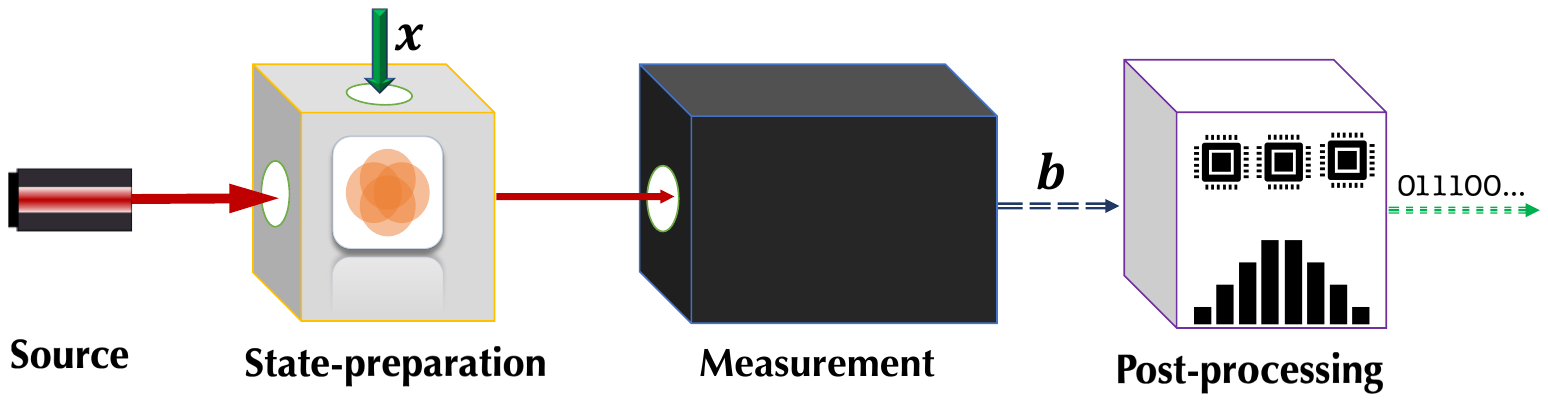}
\caption{General schematic of the protocol. A laser feeds the preparation stage, where the energy of the emitted state is bounded and monitored; the state is measured by an uncharacterised device, and the raw outcomes are fed to post-processing, where the certified min-entropy determines the extractable randomness.}
\label{fig:semi-DI_gen}
\end{figure}

\subsection{Security model and side information}\label{sec:side-info-scope}
The preparation device produces states $\rho_x$ and the measurement device performs a POVM $\hat\Pi_b^\lambda$ selected by a classical variable $\lambda$. We assume that $\Lambda$ is a classical random variable, possibly known to the adversary and correlated with the internal configuration of the measurement device, including thresholds, offsets, splitting ratios, efficiencies, firmware or a classical memory selecting among effective POVMs. It is independent of the fresh input choice, $p(\lambda|x)=p(\lambda)$, and carries no quantum system entangled with the measured mode. Throughout, $B$ denotes the complete retained output in each round. For receivers that keep an auxiliary record for extraction, such as the joint-output hybrid of Sec.~\ref{sec:hybrid-detector}, $B$ is the corresponding tuple, and $H_{\min}(B|X,\Lambda)$ refers to everything supplied to the extractor. Under this assumption the device is, without loss of generality, a convex mixture of POVMs,
\beq
p(b|x) = \sum\limits_\lambda  {p_\lambda}
\tr[\hat\rho_x \hat\Pi_b^\lambda],
\label{eq:constraints}
\eeq
with $\Tr[\hat\rho_x]=1$, $\hat\rho_x\succeq 0$, $\hat\Pi_b^\lambda\succeq 0$ and $\sum_b \hat\Pi_b^\lambda=\mathbb I$, and the certified quantity is $H_{\min}(B|X,\Lambda)$.

This restriction to classical $\Lambda$ is a substantive trust assumption and we state what it excludes. If the adversary holds a quantum register $E$ entangled with the measured mode, the convex-mixture representation fails, the correct description is an instrument acting on the signal, and the relevant quantity becomes $H_{\min}(B|X,E)$, which would require a hierarchy compatible with quantum side information. We do not claim such a result; all numerical certification below is conditioned on the classical-$\Lambda$ model. The impossibility direction holds under any adversarial model. Whenever $P(b|x)$ is independent of $x$, the strategy ``pre-sample $b$ from $P(b)$ and reveal it in $\Lambda$'' reproduces the data with $P_{\rm guess}=1$, giving $H_{\min}=0$ against classical and quantum adversaries alike. This is the mechanism behind Proposition~\ref{prop:phase-blind}.

The energy assumption must itself be operational. If a calibrated monitor returns $\hat\mu$ with calibration uncertainty $\sigma_\mu$ and finite-sample deviation $\Delta_\mu$ at failure probability $\varepsilon_\mu$, the analysis uses, with $\hat n=\hat a^\dagger\hat a$,
\begin{equation}
\mu:=\hat\mu+\sigma_\mu+\Delta_\mu, \;
\Pr\{\forall x:\ \Tr[\hat\rho_x\,\hat n]\le \mu\}\ge 1-\varepsilon_\mu,
\end{equation}
where the per-input tests are combined by a union bound (each at failure probability $\varepsilon_\mu/n$), and $\varepsilon_\mu$ enters the total failure budget alongside $\varepsilon_{\rm stat}$ and $\varepsilon_{\rm PA}$. The working constraint is then
\begin{equation}
\Tr[\hat\rho_x\,\hat n]\le \mu,\qquad \forall x.
\label{eq:energy-bound}
\end{equation}

\subsection{Guessing probability and its exact SDP}
The adversary's figure of merit is the probability of guessing the outcome given the input and $\Lambda$,
\beq
\begin{aligned}
& P_{\rm guess} =  \max \limits_{\{ p_\lambda ,\hat\rho_x,\hat\Pi _b^\lambda\}}
\left(\sum_{x} p_x \sum_{\lambda} p_\lambda
\max_b \bigg\{ \Tr[\hat\rho_x \hat\Pi^\lambda_b] \bigg\}\right), \\
\quad & \text{s.t.\ \eqref{eq:constraints},\;\eqref{eq:energy-bound},\;\text{and POVM/state constraints.}}
\label{eq:P_g}
\end{aligned}
\eeq
We introduce the guess string $\omega=(\omega_0,\dots,\omega_{n-1})\in\{0,\dots,d-1\}^{n}$, where $\omega_x$ is the guess announced for input $x$. Since the inner maximisation factorises over inputs,
\begin{equation}
\sum_x p_x\max_{\omega_x} \Tr[\hat\rho_x\hat\Pi_{\omega_x}^\lambda]
=\max_{\omega\in\{0,\dots,d-1\}^{n}}\sum_x p_x\,\Tr[\hat\rho_x\hat\Pi_{\omega_x}^\lambda],
\end{equation}
every $\lambda$ has an optimal deterministic guess map. Since both the objective and constraints are linear in the POVM elements, the values of $\lambda$ can be grouped by this map, rewriting the optimisation with one POVM per guess string,
\begin{equation}
\overline P_{\rm guess}
=\max_{\{p_\omega,\hat\rho_x,\hat\Pi_b^\omega\}}
\sum_x p_x\sum_\omega p_\omega\,\Tr\!\bigl[\hat\rho_x\,\hat\Pi_{\omega_x}^\omega\bigr],
\label{eq:Pguess-linear}
\end{equation}
subject to the observed-data constraints
\begin{equation}
p(b|x)=\sum_\omega p_\omega\,\Tr[\hat\rho_x\hat\Pi_b^\omega]\qquad \forall b,x.
\label{eq:data-constraints}
\end{equation}
For classical $\Lambda$ this refinement is exact, $\overline P_{\rm guess}=P_{\rm guess}$, whenever the states are common to all branches; the familiar reading of \eqref{eq:Pguess-linear} as an upper-bound relaxation~\cite{Bancal_2014,Brask2017,avesani2021,Tebyanian_2021} is only needed when the states themselves become optimisation variables, as in the energy-only Fock-space model discussed below. The count $d^n$ of guess strings sets the SDP size, not the statistical width: the confidence intervals below concern the $d\cdot n$ frequency bins.

\subsubsection{Fixed-Gram models: an exact program}
\label{sec:gram-sdp}
Suppose the source model fixes the Gram matrix $G_{xx'}=\langle v_x|v_{x'}\rangle$ of pure preparations (the three concrete choices are in Sec.~\ref{sec:source-models}). All ensembles realising a full-rank $G$ are related by a common unitary that can be absorbed into the POVM, so for fixed $G$ it suffices to optimise over POVMs on an $n$-dimensional space. With $\hat M_{b,\omega}:=p_\omega\hat\Pi_b^\omega$ acting on $\mathbb C^{n}$ and $A_x=|v_x\rangle\!\langle v_x|$ built from any factorisation $G=V^\dagger V$,
\begin{equation}
\begin{aligned}
P_{\rm guess}^{G}=\ \max\ & \sum_x p_x\sum_\omega \Tr[A_x \hat M_{\omega_x,\omega}]\\
\text{s.t.}\quad & \hat M_{b,\omega}\succeq 0,\qquad \sum_b \hat M_{b,\omega}=c_\omega\,\mathbb I_n,\\
& c_\omega\ge 0,\qquad \textstyle\sum_\omega c_\omega=1,\\
& \sum_\omega \Tr[A_x \hat M_{b,\omega}]=p(b|x)\quad\forall b,x.
\end{aligned}
\label{eq:gram-sdp}
\end{equation}
Because the states are fixed by $G$, preparation independence holds automatically and \eqref{eq:gram-sdp} gives the exact worst case over all classical-$\Lambda$ strategies compatible with the data and source model, rather than an outer relaxation. Every feasible point is itself a physical realisation, which determines how the comparison between honest and adversarial guessing should be interpreted in Sec.~\ref{sec:results}. For $n\ge3$ coherent constellations $G$ is complex Hermitian, and \eqref{eq:gram-sdp} is solved through the block-real embedding
\begin{equation}
  \Phi(H)=\begin{pmatrix}A&-B\\B&\phantom{-}A\end{pmatrix},
  \; H=A+iB,\;\; H\succeq 0\Leftrightarrow\Phi(H)\succeq 0,
  \label{eq:block-real}
\end{equation}
with $\Re\Tr[A_xM]=\tfrac12\langle\Phi(A_x),\Phi(M)\rangle$. Since every coefficient matrix in \eqref{eq:gram-sdp} is $\Phi$-structured, the symplectic average $S\mapsto\tfrac12(S+JSJ^{\mathsf T})$ with $J=\bigl(\begin{smallmatrix}0&\mathbb I\\-\mathbb I&0\end{smallmatrix}\bigr)$ maps any feasible point of the unstructured real program to a $\Phi$-structured point with the same objective, so the embedding loses nothing. This closes the gap left in our earlier treatment, where the complex case was stated but not solved. All $n\ge3$ coherent-certified values reported here are obtained from this program.

\subsubsection{Dual certificates}\label{sec:dual-cert}
A security claim should not rest on the primal output of a floating-point solver, so every reported bound is accompanied by a verified dual certificate based on the following observation.

\begin{lemma}\label{lem:dual}
Fix any $y\in\mathbb R^{d\times n}$ and Hermitian $W_\omega$ with
$W_\omega\succeq F_b^\omega(y):=\sum_x\bigl(p_x[\omega_x=b]-y_{b,x}\bigr)A_x$ for all $\omega,b$. Then every feasible point of \eqref{eq:gram-sdp} with data in the intervals $|p(b|x)-P(b|x)|\le\Delta$ obeys
\beq
P_{\rm guess}^{G}\ \le\ \sum_{b,x}y_{b,x}P(b|x)+\Delta\sum_{b,x}|y_{b,x}|+\max_\omega \Tr W_\omega .
\eeq
\end{lemma}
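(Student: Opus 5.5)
The approach is weak (Lagrangian) duality applied directly to a feasible point of \eqref{eq:gram-sdp}. Take any feasible $\{\hat M_{b,\omega},c_\omega\}$ whose data $p(b|x)$ satisfies $|p(b|x)-P(b|x)|\le\Delta$. The first step is to rewrite the objective as a sum over $\omega$ and $b$:
\[
\sum_x p_x\sum_\omega \Tr[A_x\hat M_{\omega_x,\omega}]=\sum_\omega\sum_b\sum_x p_x[\omega_x=b]\Tr[A_x\hat M_{b,\omega}].
\]
Because the data constraint holds with equality at the feasible point, we may add and subtract $\sum_{b,x}y_{b,x}p(b|x)=\sum_{b,x}y_{b,x}\sum_\omega\Tr[A_x\hat M_{b,\omega}]$. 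This gives
\[
\text{obj}=\sum_{b,x}y_{b,x}p(b|x)+\sum_{\omega,b}\Tr\bigl[F_b^\omega(y)\hat M_{b,\omega}\bigr].
\]

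The second step bounds the operator term. Since $\hat M_{b,\omega}\succeq 0$ and $W_\omega-F_b^\omega(y)\succeq0$, the trace of their product is nonnegative, so $\Tr[F_b^\omega\hat M_{b,\omega}]\le\Tr[W_\omega\hat M_{b,\omega}]$. Summing over $b$ and using $\sum_b\hat M_{b,\omega}=c_\omega\mathbb I_n$ gives
\[
\sum_{\omega,b}\Tr[F_b^\omega\hat M_{b,\omega}]\le\sum_\omega c_\omega\Tr W_\omega\le\max_\omega\Tr W_\omega,
\]
where the last inequality holds because $c_\omega\ge0$ and $\sum_\omega c_\omega=1$.

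The third step handles the data uncertainty. Write $\sum_{b,x}y_{b,x}p(b|x)=\sum_{b,x}y_{b,x}P(b|x)+\sum_{b,x}y_{b,x}(p(b|x)-P(b|x))$. The second sum is at most $\Delta\sum_{b,x}|y_{b,x}|$ by the triangle inequality. Combining the three steps gives the stated bound for every feasible point, and hence for $P_{\rm guess}^G$ restricted to data in the interval.

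The main point needing care is the trace positivity $\Tr[XY]\ge0$ for $X,Y\succeq0$, where $X,Y$ are Hermitian. It follows from writing $\Tr[XY]=\Tr[Y^{1/2}XY^{1/2}]$. I would also check that the objective is real: the $A_x$ and $\hat M_{b,\omega}$ are Hermitian, so every trace is real. This makes the block-real embedding irrelevant to the bound itself, and the argument uses no optimality of $y$ or $W_\omega$, only their feasibility.
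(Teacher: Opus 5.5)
Your proof is correct and is essentially the paper's argument. You split the objective into $\sum_{\omega,b}\Tr[F_b^\omega(y)\hat M_{b,\omega}]$ plus the data term $\sum_{b,x}y_{b,x}p(b|x)$, bound the first by $\max_\omega\Tr W_\omega$ using positivity, $\sum_b\hat M_{b,\omega}=c_\omega\mathbb I_n$ and $\sum_\omega c_\omega=1$, and bound the second by $\sum_{b,x}y_{b,x}P(b|x)+\Delta\sum_{b,x}|y_{b,x}|$; your trace-positivity and reality checks only make explicit what the paper leaves implicit.
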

\begin{proof}
Write the objective as $\sum_\omega\sum_b\Tr[\hat M_{b,\omega}F_b^\omega]+\sum_{b,x}y_{b,x}\sum_\omega\Tr[A_x\hat M_{b,\omega}]$. The second term is $\sum y_{b,x}\,p(b|x)\le\sum y_{b,x}P(b|x)+\Delta\sum|y_{b,x}|$. In the first, $\Tr[\hat M_{b,\omega}F_b^\omega]\le\Tr[\hat M_{b,\omega}W_\omega]$ by positivity, and summing over $b$ gives $c_\omega\Tr W_\omega\le c_\omega\max_{\omega'}\Tr W_{\omega'}$; summing over $\omega$ with $\sum c_\omega=1$ completes the proof.
\end{proof}

Since Lemma~\ref{lem:dual} holds for any $y$, the multiplier returned by the solver does not need to be trusted directly. We take the numerical dual solution, recompute the minimal dominating operators $W_\omega$, verify $W_\omega-F_b^\omega\succeq-\epsilon\,\mathbb I$ through eigenvalue checks, and absorb any residual $\epsilon$ using $W_\omega\to W_\omega+\epsilon\,\mathbb I$, which raises the bound by $n\epsilon$. The certified value is the right-hand side of the lemma after this repair. The verification is a posteriori and independent of the solver's own report, but it runs in double precision; promoting it to a formal certificate would require interval arithmetic, which we have not implemented. The agreement between the verified bound and the primal value, and the size of the absorbed residuals, are reported with the results (Fig.~\ref{fig:certificates} and Table~\ref{tab:certificates}).\footnote{All programs are solved with MOSEK~11.2.2 through its Fusion interface at interior-point tolerances $10^{-10}$.} 

\subsubsection{Energy-only model}
\label{sec:fock-model}
If one refuses all source structure beyond \eqref{eq:energy-bound}, the states become optimisation variables in an infinite-dimensional space and two prices are paid. First, the shared-state constraint
\begin{equation}
  \frac{\sigma_{x,\omega}}{p_\omega}
  =\frac{\sigma_{x,\omega'}}{p_{\omega'}},
  \qquad \sigma_{x,\omega}:=p_\omega\rho_x,
  \label{eq:prep-indep}
\end{equation}
is bilinear in $(p_\omega,\rho_x)$ and cannot enter a linear program, so it is dropped: each guess string is allowed its own states, the feasible set grows, and the resulting value is a genuine outer bound. This--and not the fixed-Gram program--is where preparation independence is lost. Second, numerics require a truncation $\mathcal H_K=\mathrm{span}\{\lvert0\rangle,\dots,\lvert K\rangle\}$; one then works with moment matrices $\widetilde\Gamma^{(x,\omega)}_{ij}=p_\omega\Tr[\rho_x w_i^{\omega\dagger}w_j^{\omega}]$ over words $w$ of POVM elements of the same context at relaxation level $\ell$, in the spirit of the prepare-and-measure hierarchies of Refs.~\cite{Bancal_2014,Brask2017,avesani2021,Tebyanian_2021,Navascues2008}, with the energy constraint imposed linearly through the moment of the fixed operator $\hat n_K=\sum_{k\le K}k\,|k\rangle\!\langle k|$,
\beq
\sum_{\omega}\widetilde\Gamma^{(x,\omega)}_{\mathbb I,\hat n_K}\le \mu\qquad \forall x .
\eeq
Under the bare energy cap the truncation error must cover arbitrary states, and Markov's inequality is tight in the worst case:
\begin{equation}
\Tr[(\mathbb I-\Pi_K)\hat\rho_x]\le \frac{\mu}{K+1}\equiv \varepsilon_K,\qquad \forall x,
\label{eq:tail-bound}
\end{equation}
which, through the gentle-measurement lemma~\cite{Winter1999}, shifts every Born probability by at most
\begin{equation}
\epsilon_{\rm tr}=\sqrt{\varepsilon_K}+\frac{\varepsilon_K}{2},
\label{eq:eps-tr}
\end{equation}
so the data intervals are widened by $\epsilon_{\rm tr}$ and the certified chain reads
\begin{equation}
  P_{\rm guess}
  \;\le\;\overline P_{\rm guess}
  \;\le\;\overline P_{\rm guess}^{(K,\ell)}+\epsilon_{\rm tr}.
  \label{eq:Pguess-chain}
\end{equation}
This route remains weak at practical truncations. At $K=15$ and $\mu=0.2$, $\varepsilon_K\approx1.25\times10^{-2}$ gives $\epsilon_{\rm tr}\approx0.118$, roughly $400$ times larger than the statistical width at $N_x=10^8$, while $\overline P_{\rm guess}^{(K,\ell)}+\epsilon_{\rm tr}$ exceeds unity and therefore certifies nothing. Reaching $\epsilon_{\rm tr}\sim10^{-2}$ would require $K\sim\mathcal O(10^3)$ according to \eqref{eq:eps-tr}, which reflects the weakness of the model rather than the hierarchy level. For a certified coherent source, the Poisson tail is much smaller. At $\mu=0.5$, for example, $K=10$ gives $\Pr\{N>K\}\approx7.7\times10^{-12}$, whereas Markov would require $K\approx6.5\times10^{10}$. More importantly, the fixed-Gram program \eqref{eq:gram-sdp} acts directly on $\mathbb C^n$ and requires no truncation. All certified results below therefore come from \eqref{eq:gram-sdp} under the source models of Sec.~\ref{sec:source-models}, and we keep \eqref{eq:Pguess-chain} only to make precise what the model-free energy route would cost.

\subsubsection{Finite statistics}
In an experiment $p(b|x)$ is estimated by frequencies $\hat p(b|x)=N_{b,x}/N_x$. With the Hoeffding--union half-widths~\cite{Hoeffding1963}
\begin{equation}
\Delta_{b,x}=\sqrt{\frac{\ln\!\bigl(2\,d\,n/\varepsilon_{\rm stat}\bigr)}{2N_x}},
\label{eq:hoeffding}
\end{equation}
the equalities \eqref{eq:data-constraints} are replaced by intervals $p(b|x)\in[\hat p\pm\Delta_{b,x}]$, which enlarges the feasible set; the POVM constraints project the resulting hyperrectangle back onto the simplex, so no separate care is needed there. The statistical accounting is performed by conditioning on the confidence event. The SDP bound holds on this event, which fails with probability at most $\varepsilon_{\rm stat}$. This failure probability enters the total composable budget of Sec.~\ref{sec:composable} separately and is not identified with a smoothing parameter. At $N_x=10^8$ and $\varepsilon_{\rm stat}=10^{-6}$, $\Delta_{b,x}\approx2.9\times10^{-4}$ for $(d,n)=(3,3)$; propagated through \eqref{eq:gram-sdp} at $\mu=0.2$ this costs the homodyne certificate $1.6\times10^{-3}$~bits, so the asymptotic figures below are representative of realistic block sizes.

\subsection{Composable extraction}\label{sec:composable}
From $N$ rounds with raw string $B^N$ and public inputs $X^N$, two-universal hashing extracts
\begin{equation}
\ell \ \le\ H_{\min}^{\varepsilon_s}(B^N|X^N,\Lambda)\ -\ 2\log_2\!\frac{1}{\varepsilon_{\rm PA}}
\end{equation}
bits that are $\varepsilon_{\rm tot}$-secure with $\varepsilon_{\rm tot}\le \varepsilon_s+\varepsilon_{\rm PA}+\varepsilon_{\rm stat}+\varepsilon_\mu$~\cite{Tomamichel2011LHL,Portmann2022}. Under collective i.i.d.\ attacks, the asymptotic equipartition property~\cite{Tomamichel2009} gives
\begin{equation}
  H_{\min}^{\varepsilon_s}(B^N|X^N,\Lambda)
  \;\ge\;N\,H_{\min}(B|X,\Lambda)
  \;-\;\sqrt{N}\;\Delta_{\rm AEP}(\varepsilon_s,d),
  \label{eq:AEP-certified}
\end{equation}
where the leading term conservatively uses $H_{\min}\le H$ to replace the conditional Shannon entropy by the SDP-certified min-entropy. One substitution is \emph{not} allowed: the observable $H(B|X)$ computed from frequencies obeys, by concavity under the mixture $P(b|x)=\sum_\lambda p_\lambda P(b|x,\lambda)$,
\begin{equation}
  H(B|X)\;\ge\;H(B|X,\Lambda),
  \label{eq:concavity-gap}
\end{equation}
with strict inequality whenever $\Lambda$ is informative--in the extreme of Proposition~\ref{prop:phase-blind}, $H(B|X)>0$ while $H(B|X,\Lambda)=0$--so using it in \eqref{eq:AEP-certified} would overestimate the extractable randomness. The gap is large even in benign cases: at the reference point of Table~\ref{tab:inner-outer-gap}, homodyne shows $H(B|X)=1.37$~bits against a certified $0.34$. For devices with memory across rounds, the product structure used in \eqref{eq:AEP-certified} is no longer available, and an entropy-accumulation or related argument would be required. We therefore restrict the multi-round extraction statement to collective i.i.d.\ attacks.

\subsection{Source models}\label{sec:source-models}
The energy cap alone already bounds overlaps. Writing $\lvert\psi_x\rangle=\sqrt{1-\nu_x}\,\lvert0\rangle+\sqrt{\nu_x}\,\lvert\gamma_x\rangle$ with $\langle0|\gamma_x\rangle=0$,
\begin{equation}
\begin{aligned}
&\bigl|\langle\psi_x|\psi_{x'}\rangle\bigr|
=\Bigl|\sqrt{1-\nu_x}\sqrt{1-\nu_{x'}}+\sqrt{\nu_x\nu_{x'}}\,\langle\gamma_x|\gamma_{x'}\rangle\Bigr|
\ \ge\ \\
&\Bigl(\sqrt{1-\nu_x}\sqrt{1-\nu_{x'}}-\sqrt{\nu_x\nu_{x'}}\Bigr)_+,
\label{eq:overlap-decomp}
\end{aligned}
\end{equation}
and since the vacuum deficit is dominated by the mean, $\nu_x=\sum_{n\ge1}p^{(x)}_n\le\sum_{n\ge1}n\,p^{(x)}_n=\mu_x$, the floor
\begin{equation}
\delta\ \ge\
\Bigl(\sqrt{(1-\mu_x)_+(1-\mu_{x'})_+}
-\sqrt{\mu_x\mu_{x'}}\Bigr)_+
\label{eq:delta-mu-general}
\end{equation}
follows, reducing for equal energies $\mu\le\tfrac12$ to
\begin{equation}
\delta\ \ge\ 1-2\mu,
\label{eq:delta-linear}
\end{equation}
which is tight: $\ket{\psi_{0,1}}=\sqrt{1-\mu}\ket{0}\pm\sqrt{\mu}\ket{1}$ saturates it with $\langle\psi_j|\hat n|\psi_j\rangle=\mu$. For $\mu\ge\tfrac12$ the floor is trivial, so the energy-only analysis is confined to the low-energy regime. For coherent states $\langle\alpha|\beta\rangle=e^{-\mu+\alpha^*\beta}$~\cite{MandelWolf1995}, so BPSK has $\delta=e^{-2\mu}$, and for any $n$-PSK pair the hierarchy
\begin{equation}
  \underbrace{1-2\mu}_{\text{energy floor}}
  \;\le\;
  \underbrace{e^{-2\mu}}_{\text{worst coherent pair}}
  \;\le\;
  \underbrace{e^{-\mu(1-\cos\Delta\phi)}}_{|G^{\rm coh}_{xx'}|}
  \;\le\;1
  \label{eq:overlap-hierarchy}
\end{equation}
holds by $e^{-t}\ge1-t$ and $1-\cos\Delta\phi\le2$. The fixed-Gram models apply to pure signal states because both the overlap derivation and the SDP are defined for a pure-state ensemble. The coherent model adds the assumption that the source emits verified single-mode coherent PSK states. Squeezed-coherent BPSK states have also been studied within a related fixed-Gram semi-DI model with classical detector side information~\cite{TebyanianSqueezed2026}. A mean-photon-number bound alone does not establish purity or determine the Gram matrix, so mixed energy-bounded states are treated only through the truncated-Fock method of Sec.~\ref{sec:fock-model}. Although a fidelity-based extension of \eqref{eq:delta-mu-general} exists for mixed states, it is not used here. The three models are summarised in Table~\ref{tab:models} and Fig.~\ref{fig:hmin}. Their entropy values follow the order floor, magnitude Gram and coherent Gram in all numerical cases studied, but this ordering is not a general inclusion relation.

\paragraph{(i) Energy floor.} Only $|\langle\psi_x|\psi_{x'}\rangle|\ge\delta=(1-2\mu)_+$ is assumed. For $n=2$, the overlap can be chosen real and non-negative by rephasing. We scan the fixed-Gram program over $[\delta,1]$ and find that $\overline P_{\rm guess}$ decreases throughout the feasible range, so the worst case occurs at the overlap floor. At the reference point $\mu=0.2$, overlaps above approximately $0.79$ are already incompatible with the observed data, as shown in Fig.~\ref{fig:scan}a. This agrees with the monotonicity argument of Ref.~\cite{Brask2017} and makes the binary floor result certified. For $n\ge3$, the unknown off-diagonal phases make the modulus constraint non-convex, so an exact convex formulation is not available. We therefore use the real equal-overlap Gram $G_{xx'}=\delta$ and test it against circulant phase families and $30$ random Hermitian phase patterns with the same modulus. In every case, the real Gram gives the largest $\overline P_{\rm guess}$, as shown in Fig.~\ref{fig:scan}b. The corresponding $n\ge3$ floor values are therefore reported as worst-case benchmarks supported by these scans rather than as theorem-level certificates.

\paragraph{(ii) Magnitude Gram.} For fixed-modulus PSK states, discarding the phases of the coherent Gram gives $G^{\rm mag}_{xx'}=e^{-\mu(1-\cos(\phi_{x'}-\phi_x))}$. This matrix is positive semidefinite for every $\mu$ and every phase set: $\mu[\cos(\phi_x-\phi_{x'})]_{xx'}$ is a Gram matrix of rank two, and the entrywise exponential of a PSD matrix is PSD by the Schur product theorem applied to its Hadamard powers, so $G^{\rm mag}=e^{-\mu}\exp_\circ(\mu[\cos\Delta\phi])\succeq0$. This model is neither a restriction nor a relaxation of the coherent Gram model, so $H_{\min}^{\rm mag}$ should be read as an independent benchmark rather than a bound on $H_{\min}^{\rm coh}$. In all numerical results, it lies between the floor and coherent values, as shown in Fig.~\ref{fig:hmin}b.

\paragraph{(iii) Coherent-certified.} A source verified to emit single-mode coherent states fixes the full complex Gram
\begin{equation}
  G^{\rm coh}_{xx'}=\exp\!\bigl[-\mu(1-e^{i(\phi_{x'}-\phi_x)})\bigr],
  \label{eq:coh-gram}
\end{equation}
and using this $G$ in \eqref{eq:gram-sdp}, together with the block-real form \eqref{eq:block-real}, gives the exact worst case under the classical-$\Lambda$ model. This is the source model used for the main certified results.

For the overlap-constrained variant an explicit family with uniform pairwise overlap $\delta$ is occasionally convenient,
\beq
\begin{aligned}
&\ket{\psi_0}=\ket{0}, \qquad
\ket{\psi_1}=\delta \ket{0}+ \sqrt{1-\delta^2} \ket{1}, \\
&\ket{\psi_2}=\delta \ket{0}+ \delta \sqrt{\tfrac{1-\delta}{1+\delta}} \ket{1} + \sqrt{\tfrac{1+\delta-2\delta^2}{1+\delta}} \ket{2},\\
&\ket{\psi_n}= \sum_{i=0}^{n-2} \bra{i}\ket{\psi_{n-1}} \ket{i} + \mathcal{X}_n \ket{n-1}+\mathcal{Y}_n \ket{n}\;\; \forall n > 2,
\label{states}
\end{aligned}
\eeq
with
\beq
\begin{aligned}
&\mathcal{X}_n=\frac{\delta - \sum_{i=0}^{n-2}|\!\braket{i}{\psi_{n-1}}\!|^2}{\braket{n-1}{\psi_{n-1}}},\\
&\mathcal{Y}_n=\sqrt{1-\sum_{i=0}^{n-2}|\!\braket{i}{\psi_{n-1}}\!|^2-\mathcal{X}_n^2},
\end{aligned}
\eeq
constructed inductively so that $\langle\psi_j|\psi_n\rangle=\delta$ for all $j<n$; it realises the equal-overlap Gram but is not used to enforce \eqref{eq:energy-bound}. Note also that no useful bound exists without data: for any $\mu$ or $\delta$ alone, deterministic behaviours are admissible and $H_{\min}=0$, so the role of the source constraint is only to restrict the behaviours compatible with the observed $p(b|x)$, and $H_{\min}(B|X,\Lambda)\le\log_2 d$ is the trivial cap.

\subsection{Receivers}
In this section we describe the five receiver families shown in Fig.~\ref{fig:detection}. For coherent $n$-PSK inputs $\ket{\alpha e^{i\phi_x}}$, with $\phi_x=2\pi x/n$ and $\mu=|\alpha|^2$, each receiver produces a conditional distribution $P(b|x)$ that is used directly in the SDP. The binning and detector imperfections are therefore included in the certification through the corresponding receiver model.

\begin{figure}[!h]
\centering
\includegraphics[width=\linewidth]{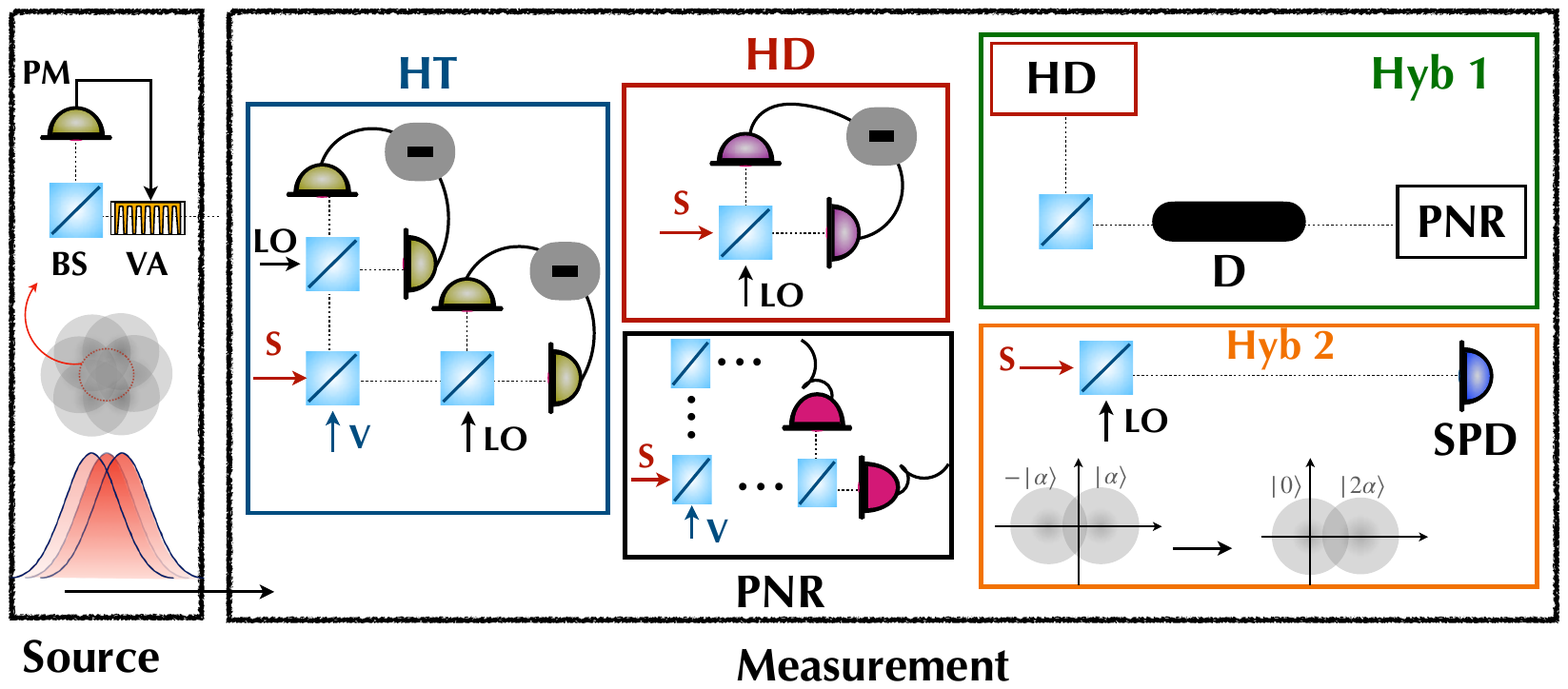}
\caption{Detection schemes, labelled as in the panels. HD: homodyne, one quadrature by interference with a local oscillator. HT: heterodyne, both quadratures at the cost of an added vacuum unit. PNR: photon counting with multiplexed on--off elements. Hyb~1: beacon homodyne, conditional displacement, photon counting. Hyb~2 (Kennedy): nulling displacement and a single-photon detector.}
\label{fig:detection}
\end{figure}

\subsubsection{Homodyne and heterodyne}
With vacuum variance $1/2$, ideal homodyne on the $X$ quadrature returns
\beq
P_X(x\!\mid\!\alpha)=\frac1{\sqrt{\pi}}e^{-\bigl(x-\sqrt{2}\,\Re(\alpha)\bigr)^{2}},
\eeq
and efficiency $\eta_{\rm cv}$ with electronic noise $v_{\rm el}$ acts as the Gaussian channel
\begin{equation}
x_{\rm out}=\sqrt{\eta_{\rm cv}}\;x_{\rm in}+z,\; z\sim\mathcal N(0,\sigma^{2}),\;
\sigma^{2}=\frac{1-\eta_{\rm cv}}{2}+v_{\rm el},
\end{equation}
so the outcome density stays Gaussian with mean $\sqrt{2\eta_{\rm cv}}\,\Re(\alpha)$ and variance $\tfrac12+v_{\rm el}$. Binning the axis into $[A,B]$ gives
\beq
\begin{aligned}
\label{prob_homo}
&P_{[A,B]}^{\rm HD}(\alpha)= \\
&\frac12\!\left[
\erf\!\left(\frac{B-\sqrt{2\eta_{\rm cv}}\,\Re(\alpha)}{\sqrt{1+2v_{\rm el}}}\right)
-\erf\!\left(\frac{A-\sqrt{2\eta_{\rm cv}}\,\Re(\alpha)}{\sqrt{1+2v_{\rm el}}}\right)
\right].
\end{aligned}
\eeq
For $n$-PSK we measure the $P$ quadrature, whose means $\sqrt{2\eta_{\rm cv}\mu}\sin\phi_x$ separate all inputs for odd $n$, and place the $d=n$ bin edges at the midpoints of adjacent nominal means, which is the MAP rule for equal-variance Gaussians with uniform priors; the edges are frozen at their nominal positions in all drift studies.

Heterodyne implements the coherent-state POVM $\{\pi^{-1}\ket{\beta}\!\bra{\beta}\}$, an Arthurs--Kelly joint measurement~\cite{ArthursKelly1965}. The same two-quadrature receiver architecture has been used for source-device-independent randomness generation with real-time FPGA extraction~\cite{Cizauskas2026}, and its ideal outcome is distributed as
\beq
p_{\rm HT}(\beta\!\mid\!\alpha)=\frac1\pi
           e^{-|\,\beta-\alpha|^{2}},\quad\beta\in\mathbb C,
\eeq
each quadrature carrying the unavoidable extra $3$~dB relative to homodyne on the same energy--no uncertainty-principle violation is involved. In Cartesian and polar form,
\beq
\label{probhet}
P_{\mathcal R}^{\rm HT}(\alpha)=
\frac1\pi\!
\int_{b_i}^{b_f}\!\!\int_{B_i}^{B_f}
e^{-(b-a)^{2}-(B-A)^{2}}\;db\,dB,
\eeq
\beq
\label{probhet2}
P_{\mathcal S}^{\rm HT}(\alpha)=
\frac{e^{-a^{2}}}{\pi}
\int_{r_i}^{r_f}\!\!\int_{\theta_i}^{\theta_f}
      r\,e^{-r^{2}+2ar\cos(\theta-\phi)}\,dr\,d\theta,
\eeq
and the radial integral over $[0,\infty)$ has the closed form
\begin{equation}
\int_0^\infty r\,e^{-r^2+2cr}\,dr
=\tfrac12+\tfrac{\sqrt\pi}{2}\,c\,e^{c^2}[1+\erf(c)],\;
c=a\cos(\theta-\phi).
\end{equation}
For $n$-PSK we use $d=n$ angular sectors of width $2\pi/n$ centred on the nominal phases, likewise frozen under drift.

\subsubsection{Photon counting}
A single-photon detector with efficiency $\eta$ and dark-count probability $p_{\rm dc}$ has the Fock-diagonal POVM
\begin{equation}
\hat\Pi_{0}=(1-p_{\mathrm{dc}})\sum_{k=0}^{\infty}(1-\eta)^{k}\ket{k}\!\bra{k},\qquad
\hat\Pi_{1}=\hat{\mathbb I}-\hat\Pi_{0},\label{1}
\end{equation}
so for coherent input $P(0\mid x)=(1-p_{\mathrm{dc}})\,e^{-\eta\mu_x}$. Multiplexed photon-number resolution~\cite{Lita2008,Natarajan2012} distributes the mode over $N_{\mathrm{bin}}$ on--off elements; for coherent input the fan-out preserves product coherence, each bin sees an independent coherent state of mean $\mu_x/N_{\rm bin}$, and the click number is binomial,
\begin{equation}
\begin{aligned}
&P(c\mid x)=\binom{N_{\mathrm{bin}}}{c}\Bigl[1-s_x\Bigr]^{c}s_x^{\,N_{\mathrm{bin}}-c},\\
&s_x=(1-p_{\mathrm{dc}}^{\mathrm{bin}})\,e^{-\eta\mu_x/N_{\mathrm{bin}}},
\label{5}
\end{aligned}
\end{equation}
with $p_{\rm dc}^{\rm bin}$ the per-element dark-count probability. Both models are diagonal in the Fock basis, so $\partial P/\partial\phi_x\equiv0$ for fixed-modulus inputs: photon counting alone satisfies the hypothesis of Proposition~\ref{prop:phase-blind} and certifies nothing for PSK, however good the detector. Figure~\ref{fig:ternary_pbx_mu02} shows this at the level of the actual SDP inputs. Under a trusted-overlap model at higher $\mu$, photon counting can still exploit multi-photon statistics for intensity-modulated alphabets, which is why we keep it in the comparison as the on--off keying reference.

\begin{figure}[!h]
\centering
\includegraphics[width=\linewidth]{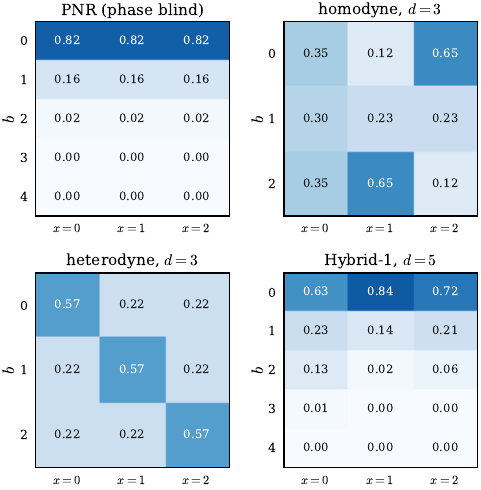}
\caption{Conditional laws $P(b|x)$ for ternary PSK at $\mu=0.2$, ideal detectors--the exact matrices used as SDP input. PNR (multiplexed, $N_{\rm bin}=20$, click bins $\{0\},\{1\},\{2,3\},\{4$--$7\},\{\geq8\}$) is phase blind: its columns are identical, so $H_{\min}=0$ by Proposition~\ref{prop:phase-blind}. Homodyne ($P$ quadrature, midpoint bins), heterodyne (angular sectors) and Hybrid-1 (parameters of Sec.~\ref{sec:hyb1-params}, beacon label traced out) produce input-dependent columns.}
\label{fig:ternary_pbx_mu02}
\end{figure}

\subsubsection{Hybrid-1: beacon homodyne with feed-forward displacement}\label{sec:hybrid-detector}
Hybrid receivers combine quadrature information with photon counting~\cite{Izumi2012,Becerra2013}. Hybrid-1 (Fig.~\ref{fig:hyb}) taps the input on a 50:50 splitter; the reflected mode is measured by homodyne on the $p$ quadrature, whose outcome density for input $x$ is
\begin{equation}
f_x(p)=\frac{1}{\sqrt{\pi}}\,
\exp\!\Bigl[-\bigl(p-|\alpha|\sin\phi_x\bigr)^2\Bigr],
\end{equation}
since the tapped amplitude is $\alpha e^{i\phi_x}/\sqrt2$. Two thresholds $T_2<T_1$ cut the axis into $\Omega_2=(-\infty,T_2)$, $\Omega_0=[T_2,T_1]$, $\Omega_1=(T_1,\infty)$, and on $p\in\Omega_k$ the transmitted mode is displaced by $D(kg\alpha)$ with gain $g>0$, giving the displaced amplitude and mean photon number
\beq
\begin{aligned}
&\beta_{k,x}=\frac{\alpha e^{i\phi_x}}{\sqrt2}+kg\alpha,
\\
&\lambda_{k,x}=|\beta_{k,x}|^2
=\mu\Bigl(\tfrac12+k^{2}g^2+\sqrt2\,kg\cos\phi_x\Bigr).
\label{eq:lambda-def}
\end{aligned}
\eeq
The displaced mode is measured by the multiplexed click detector \eqref{5} with $\mu_x\to\lambda_{k,x}$, and a deterministic map coarse-grains the click number $c$ into bins $\mathcal C_0=\{0\}$, $\mathcal C_1=\{1,\dots,c_1^{\max}\}$, \dots, $\mathcal C_4=\{c_3^{\max}{+}1,\dots,N_{\rm bin}\}$. Two output conventions must be distinguished, because they certify different things. If only the click bin is kept as the outcome, the law is
\begin{equation}
\begin{aligned}
P(b\mid x)=\sum_{k}\;
\int_{\Omega_k} f_x(p)\,dp
\sum_{c\in\mathcal C_b}
\binom{N_{\mathrm{bin}}}{c}(1-s_{k,x})^{c} s_{k,x}^{\,N_{\mathrm{bin}}-c},
\end{aligned}
\label{eq:hyb-coarse}
\end{equation}
$s_{k,x}=(1-p^{\rm bin}_{\rm dc})e^{-\eta\lambda_{k,x}/N_{\rm bin}}$, a $d=5$ alphabet. If the beacon-region label $k$ is instead retained as part of the raw outcome, the alphabet is the joint $(k,b)$ and the integrand is not summed over $k$. The certified quantity for that alphabet is $H_{\min}(K,B|X,\Lambda)$: the min-entropy of the pair, which stays secret and is fed whole into the extractor. This must not be confused with the case of a beacon made public, where the relevant quantity would be $H_{\min}(B|X,\Lambda,K)$ with $K$ handed to the adversary--a smaller quantity that we do not compute. Since deterministic post-processing cannot increase conditional min-entropy, $H_{\min}(B_{\rm click}|X,\Lambda)\le H_{\min}(K,B_{\rm click}|X,\Lambda)$, and the gap turns out to be the whole low-energy advantage of this receiver (Sec.~\ref{sec:results}): tracing out the beacon discards precisely the phase-sensitive record that makes the hybrid competitive.

The beacon breaks a structural degeneracy. For ternary PSK, $\cos(2\pi/3)=\cos(4\pi/3)=-\tfrac12$ makes $\lambda_{k,1}=\lambda_{k,2}$ for every $k$, so conditioned on a region the click statistics of $x=1$ and $x=2$ are identical; the beacon means $\pm|\alpha|\sqrt3/2$ separate them, and the joint outcome distinguishes all three inputs. The 50:50 tap also imposes a real cost: the transmitted mode carries $\mu/2$, so the undisplaced pairwise overlap $e^{-\frac{\mu}{2}(1-\cos\Delta\phi)}$ is larger than the unsplit one, and the hybrid can only win when the displacement-induced separation of the $\lambda_{k,x}$ plus the enlarged outcome alphabet buys back more than this 3~dB penalty--which happens in the energy-starved regime, not at moderate $\mu$.

\begin{figure*}[!t]
\centering
\includegraphics[width=\linewidth]{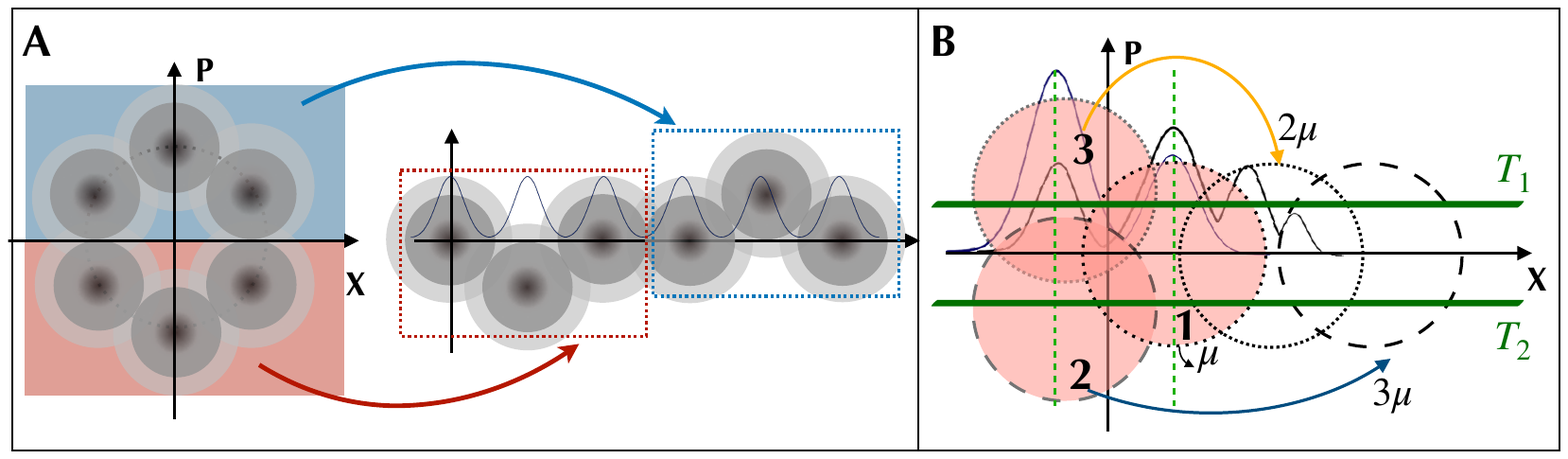}
\caption{Hybrid-1 receiver. (a) General operating principle. A beacon homodyne measurement provides a coarse phase estimate, which determines the conditional displacement applied to the transmitted mode before photon-number-resolving detection. The panel gives a schematic phase partition rather than the precise ternary implementation. (b) Ternary implementation. Two thresholds, $T_2<T_1$, divide the beacon outcome into three regions associated with $D(2g\alpha)$, the identity operation and $D(g\alpha)$. The displaced amplitudes produce the branch- and input-dependent mean photon numbers $\lambda_{k,x}$ of Eq.~\eqref{eq:lambda-def}, which determine the corresponding PNR click distributions.}
\label{fig:hyb}
\end{figure*}

\subsubsection{Hybrid-2: Kennedy receiver and its cascade}
The Kennedy receiver~\cite{Kennedy1973,Dolinar1973} interferes the BPSK signal with a phase-locked local oscillator on a highly transmissive splitter, $\hat a_{\text{out}} = t\,\hat a_{s}+r\,\hat a_{\text{LO}}$, choosing $\beta=t\alpha/r$ so that $-\alpha$ is nulled and $+\alpha$ is displaced to $\gamma_+=2t\alpha$. An SPD with efficiency $\eta$ and dark counts $p_{\rm dc}$ then gives
\begin{equation}
\begin{aligned}
P(b{=}1\mid +)&=1-(1-p_{\text{dc}})\exp\!\bigl[-4\eta|t|^{2}\mu\bigr],\\
P(b{=}1\mid -)&=p_{\text{dc}},
\end{aligned}
\label{eq:kennedy-binary}
\end{equation}
the ideal case being the displaced on--off POVM $\{D^\dagger(\alpha)\ket{0}\!\bra{0}D(\alpha),\ \mathbb I-D^\dagger(\alpha)\ket{0}\!\bra{0}D(\alpha)\}$. The factor $4$ is the coherent displacement $|2t\alpha|^2$, not an interference gain. For $n$-PSK the idea extends to a cascade: stage $k$ taps a fraction $|r|^2$ of the (depleted) signal $\mu_k=\mu(1-|r|^2)^k$, nulls hypothesis $k$, and the output is the first stage that clicks, with $b=n{-}1$ if none does. The stage amplitudes and no-click probabilities are
\begin{equation}
  |\gamma_{kx}|^2
  =4|r|^2\mu_k\sin^2\!\tfrac{\pi(x-k)}{n},\quad
  q_{kx}=(1-p_{\rm dc})e^{-\eta|\gamma_{kx}|^2},
  \label{eq:kennedy-qkx}
\end{equation}
\begin{equation}
\begin{aligned}
P(b{=}k\mid x)&=(1-q_{k x})\prod_{\ell=0}^{k-1}q_{\ell x},\\
P(b{=}n{-}1\mid x)&=\prod_{\ell=0}^{n-2}q_{\ell x}.
\end{aligned}
\end{equation}
The label $b$ therefore records the first \emph{rejection}: a click at stage $k$ is evidence against hypothesis $k$, not a decoded estimate, which is what the passive weak-tap cascade physically provides. It should not be confused with the adaptive accept-on-no-click rule of Ref.~\cite{Becerra2013}, whose stages consume large fractions of the pulse. For everything computed in this paper the announcement convention is immaterial in any case: $I(X;B)$, the MAP value and the SDP optimum are all invariant under relabelling of the outcomes, so any deterministic naming of the click record yields identical numbers. For ternary PSK, $\sin^2(\pi/3)=\sin^2(2\pi/3)=3/4$ makes every non-nulled input produce identical click probabilities at each stage; this architectural degeneracy, not energy loss, is why the cascade is the weakest ternary receiver: at the operating tap $|r|^2=0.10$ the final stage still receives $90\%$ ($n{=}3$) and $73\%$ ($n{=}5$) of the signal, a depletion far too small to account for the observed photon-efficiency deficit, yet each stage can only reject one hypothesis.

\subsubsection{Numerical parameters}\label{sec:hyb1-params}
All figures and SDP runs use the following, stated here once. Ternary Hybrid-1 at the reference point $\mu=0.2$ (Table~\ref{tab:inner-outer-gap}): $T_2=-0.43$, $T_1=+0.43$, $g=1$, $N_{\rm bin}=20$, $\eta=1$, $p_{\rm dc}^{\rm bin}=0$, click cuts $(c_1^{\max},c_2^{\max},c_3^{\max})=(1,3,7)$, beacon label traced out ($d=5$). Low-energy Hybrid-1: thresholds at the midpoints of the nominal beacon means (which scale as $\sqrt\mu$), gain optimised--$g\approx4$ is near-optimal at small $\mu$, and the older impression that the nulling choice $g\approx1$ suffices does not survive the numbers, costing about $15\%$ of the photon efficiency--and, where stated, the joint output $(k,b)$ with click cuts $(1)$, i.e.\ $c\in\{0\},\{1\},\{\ge2\}$, giving $d=9$ for the SDP. Kennedy: binary $|t|^2=0.99$; the ternary and 5-PSK cascades use $|r|^2=0.10$ (the tap that reproduces the reference values; at $|r|^2=0.01$ the cascade barely clicks at these energies and certifies an order of magnitude less). Detector-imperfection studies vary $\eta$ and $p_{\rm dc}$ as stated in captions. For the common-efficiency comparison of Fig.~\ref{fig:eta}, a pure-loss channel of transmissivity $\eta$ is applied before the receiver, so that $\alpha\rightarrow\sqrt{\eta}\alpha$ in the corresponding conditional law; internal detector efficiencies are otherwise kept fixed to avoid double counting.

\section{Distinguishability and its relation to certification}\label{sec:distinguishability}
To compare receivers as decoders we treat each as a classical channel with MAP success and calibrated index
\begin{equation}
P_{\mathrm{suc}}=\sum_b \max_x \,p_x\,P(b|x),
\qquad
\mathcal{I}=\frac{n\,P_{\mathrm{suc}}-1}{n-1},
\label{eq:distinguish-index}
\end{equation}
so $\mathcal I=0$ is blind guessing and $\mathcal I=1$ perfect discrimination. The binary quantum limit is the Helstrom bound~\cite{Helstrom1976}
\begin{equation}
\begin{aligned}
&P_{\mathrm{err}}^{\mathrm{Helstrom}}
=\frac12\Bigl(1-\sqrt{1-|\langle\psi_0|\psi_1\rangle|^2}\Bigr)
\;\xrightarrow{\ \rm BPSK\ }\\
&\frac12\Bigl(1-\sqrt{1-e^{-4\mu}}\Bigr),
\label{eq:Helstrom-BPSK}
\end{aligned}
\end{equation}
whose small-$\mu$ expansion is $\tfrac12-\sqrt{\mu}+O(\mu)$, while at large $\mu$ it approaches $\tfrac14e^{-4\mu}$. The Gaussian receivers obey
\begin{equation}
P_{\mathrm{err}}^{\mathrm{HD}}(\mu)=\tfrac12\,\mathrm{erfc}\!\bigl(\sqrt{2\mu}\bigr),
\qquad
P_{\mathrm{err}}^{\mathrm{HT}}(\mu)=\tfrac12\,\mathrm{erfc}\!\bigl(\sqrt{\mu}\bigr),
\end{equation}
the heterodyne penalty being exactly the joint-measurement 3~dB, while the ideal Kennedy error $\tfrac12 e^{-4\eta|t|^2\mu}$ beats the homodyne exponent at large $\mu$. Photon counting on OOK $\{\ket 0,\ket\alpha\}$ gives the symmetric error $\tfrac12[p_{\mathrm{dc}}+(1-p_{\mathrm{dc}})e^{-\eta\mu}]$, and the multiplexed PNR reduces to the same form with $p_{\mathrm{dc}}^{\mathrm{det}}=1-(1-p_{\mathrm{dc}}^{\mathrm{bin}})^{N_{\mathrm{bin}}}$.

These benchmarks must not be conflated with certification. $\mathcal I(\mu)$ is a property of the known physical channel, whereas $H_{\min}(B|X,\Lambda)$ comes from an adversarial optimisation over every realisation compatible with the same data and source model; the two connect only through the data constraints. Input-independent statistics force $H_{\min}=0$ by Proposition~\ref{prop:phase-blind}, so $\mathcal I>0$ is necessary for certification. The converse does not hold, since coarse binning can leave the SDP enough freedom to reproduce $P(b|x)$ with near-deterministic mixtures even when the physical channel is far from deterministic. Every randomness claim below therefore rests on the SDP, while the distinguishability plots provide receiver benchmarks and identify obvious no-go regimes. For each receiver we report the certified $H_{\min}=-\log_2\overline P_{\rm guess}$ under the stated source model, the peak $H^\star_{\min}$ and its maximiser $\mu^\star$ where relevant, and the honest inner value $\underline P_{\rm guess}=\sum_xp_x\max_bP(b|x)$ of the trivial-$\Lambda$ realisation, whose distance from $\overline P_{\rm guess}$ measures how much the allowed side information is worth to the adversary.

\section{Results}\label{sec:results}

\subsection{Certified min-entropy}
Table~\ref{tab:inner-outer-gap} gives the reference point: ternary PSK, $\mu=0.2$, ideal detectors, coherent-certified model. Homodyne certifies $0.336$~bits, heterodyne certifies $0.328$~bits, while the Kennedy cascade and coarse-grained Hybrid-1 certify $0.103$ and $0.102$~bits, respectively. The inner value is not a lower bound on the certified entropy; it is the guessing probability of the honest device with trivial side information, and the gap $\overline P_{\rm guess}-\underline P_{\rm guess}$ is the adversary's side-information advantage. Because the fixed-Gram SDP is exact, this gap is not relaxation slack--every feasible point is a physical realisation--so its size is a genuine statement about the model: for homodyne, branch-dependent POVM mixtures predict $0.24$ better than naive MAP decoding, which in entropy terms halves the naive estimate ($0.86\to0.34$~bits).

\begin{table}[h]
\centering
\caption{Ternary PSK, $\mu=0.2$, ideal detectors, coherent-certified model (exact SDP). $\overline{P}_{\rm guess}$: verified dual bound; $\underline{P}_{\rm guess}=\sum_x p_x\max_b P(b|x)$: honest MAP value of the trivial-$\Lambda$ realisation. The gap is the adversary's side-information advantage, not relaxation slack: for phase-sensitive receivers, allowed branching over $\Lambda$ is worth $\approx0.2$ in guessing probability (about half a bit), while the near-deterministic Kennedy channel leaves almost nothing to exploit.}
\label{tab:inner-outer-gap}
\begin{tabular}{lcccc}
\hline
Receiver & $\overline{P}_{\rm guess}$ & $\underline{P}_{\rm guess}$
& Gap ($P_{\rm g}$) & Gap ($H_{\min}$, bits) \\
\hline
Homodyne ($d{=}3$)   & 0.7921 & 0.5502 & 0.2419 & 0.526 \\
Heterodyne ($d{=}3$) & 0.7964 & 0.5668 & 0.2296 & 0.491 \\
Kennedy cascade      & 0.9310 & 0.9272 & 0.0037 & 0.006 \\
Hybrid-1 ($d{=}5$)   & 0.9320 & 0.7291 & 0.2029 & 0.354 \\
\hline
\end{tabular}
\end{table}

\begin{table*}[t!]
\centering
\caption{Scope of the claims: source models, their status, and where each is used. The adversary model is classical $\Lambda$ (Sec.~\ref{sec:side-info-scope}) in every row; the SDP is exact for every fixed-Gram row. Certified means a theorem-level worst case under that model; the $n\ge3$ floor is a benchmark whose extremality is supported by the phase scans of Fig.~\ref{fig:scan}.}
\label{tab:models}
\resizebox{\textwidth}{!}{%
\begin{tabular}{lcccc}
\hline
Model & constraint & truncation & certified & used in \\
\hline
Energy floor, $n{=}2$ & $G_{01}\in[(1-2\mu)_+,1]$ & none & yes & Fig.~\ref{fig:scan}a \\
Energy floor, $n{\ge}3$ & $G_{xx'}=(1-2\mu)_+$ & none & benchmark & Figs.~\ref{fig:hmin}b,~\ref{fig:eta},~\ref{fig:scan}b \\
Magnitude Gram & $G=G^{\rm mag}$, Sec.~\ref{sec:source-models}(ii) & none & separate model & Fig.~\ref{fig:hmin}b \\
Coherent-certified & $G=G^{\rm coh}$, Eq.~\eqref{eq:coh-gram} & none & yes & Tables~\ref{tab:inner-outer-gap},~\ref{tab:certificates}; Figs.~\ref{fig:hmin}--\ref{fig:certificates} \\
Energy-only Fock & $\Tr[\rho_x\hat n]\le\mu$ & $K$, Eq.~\eqref{eq:eps-tr} & yes, trivial at $K{=}15$ & Sec.~\ref{sec:fock-model} only \\
\hline
\end{tabular}}
\end{table*}

Figure~\ref{fig:hmin}(a) shows the certified entropy against $\mu$ in the coherent model. Homodyne rises from $0.35$~bits at $\mu=0.005$ to its peak $H^\star_{\min}=0.43$~bits at $\mu^\star\approx0.054$ and falls back to $0.09$~bits at $\mu=1$, where the constellation becomes too distinguishable to constrain the adversary; certification does not die at low energy because the bin edges scale with $\sqrt\mu$, so the states approach each other exactly as fast as the data flatten. Heterodyne runs parallel and slightly below, while the Kennedy cascade and the coarse-grained Hybrid-1 remain substantially below the Gaussian receivers. The joint-output Hybrid-1 changes the low-energy picture: at $\mu=0.01$ it certifies $0.379$~bits against homodyne's $0.362$, the advantage reaching $7\%$ at $\mu=0.005$, while by $\mu\approx0.02$ homodyne is ahead again. Retaining the beacon label in the raw output is what buys this--the coarse-grained $d=5$ output never beats homodyne anywhere. The pipeline also carries the larger alphabet: for 5-PSK, with $5^5=3125$ guess strings in the program, coherent-certified homodyne reaches $H_{\min}=0.50$, $0.57$ and $0.60$~bits at $\mu=0.1$, $0.2$ and $0.4$ under the same dual verification, so a larger constellation certifies substantially more entropy per round at comparable energy.

Because the low-energy crossover is central to the receiver comparison, we test whether it survives finite statistics. Table~\ref{tab:finite} repeats the certification with Hoeffding boxes \eqref{eq:hoeffding} on all $d\cdot n$ frequency bins at $\varepsilon_{\rm stat}=10^{-6}$. The joint alphabet pays about four times the finite-size penalty of the $d=3$ homodyne alphabet, with $0.026$ against $0.006$~bits at $\mu=0.005$ and $N_x=10^8$, in part because it has three times as many frequencies to stabilise, yet the ordering survives: at $N_x=10^8$ the hybrid still certifies more than homodyne at $\mu=0.005$ and $0.01$, with the margin compressed to about $1.5\%$, and at $N_x=10^9$ the margin recovers to $3.5$--$4.5\%$; at $\mu=0.02$ homodyne leads in every column, consistent with the asymptotic crossover. Energy-calibration uncertainty must be evaluated at fixed observed data by varying the allowed Gram matrix. A $1\%$ upward shift in the certified $\mu$ reduces the homodyne bound by about $0.005$--$0.007$~bits over the tested points, so the low-energy hybrid crossover is not established under this uncertainty. The failure probability $\varepsilon_\mu$ enters the total budget as in Sec.~\ref{sec:side-info-scope}.

\begin{table}[h]
\centering
\caption{Finite-statistics check of the low-energy hybrid advantage (coherent model, ideal detectors, $\varepsilon_{\rm stat}=10^{-6}$). Certified $H_{\min}$ in bits; ``asym.'' uses exact data constraints, the $N_x$ columns use Hoeffding boxes on all $d\cdot n$ bins ($\Delta_{b,x}=2.9\times10^{-4}$ and $3.0\times10^{-4}$ at $N_x=10^8$ for $d=3$ and $d=9$; $\sqrt{10}$ smaller at $10^9$). Bold marks the larger certified value at each $(\mu,N_x)$.}
\label{tab:finite}
\resizebox{\columnwidth}{!}{%
\begin{tabular}{lcccccc}
\hline
 & \multicolumn{3}{c}{Homodyne ($d{=}3$)} & \multicolumn{3}{c}{Hybrid-1 joint ($d{=}9$)} \\
$\mu$ & asym. & $N_x{=}10^8$ & $N_x{=}10^9$ & asym. & $N_x{=}10^8$ & $N_x{=}10^9$ \\
\hline
0.005 & 0.347 & 0.341 & 0.345 & \textbf{0.371} & \textbf{0.345} & \textbf{0.361} \\
0.01  & 0.362 & 0.357 & 0.360 & \textbf{0.379} & \textbf{0.362} & \textbf{0.373} \\
0.02  & \textbf{0.385} & \textbf{0.382} & \textbf{0.384} & 0.362 & 0.352 & 0.359 \\
\hline
\end{tabular}}
\end{table} Figure~\ref{fig:hmin}(b) isolates the value of source knowledge for homodyne: the floor, magnitude and coherent models certify $0.119$, $0.266$ and $0.336$~bits at $\mu=0.2$, the floor curve dying at $\mu=\tfrac12$ where $(1-2\mu)_+$ becomes trivial. Sensitivity to the common loss is modest for both Gaussian receivers (Fig.~\ref{fig:eta}). At $\mu=0.2$ they each certify approximately $0.17$~bits at $\eta=0.3$. Heterodyne is slightly above homodyne over most of the lossy region, while homodyne recovers the lead near unit transmissivity; the same interchange appears in the energy-floor benchmark.

\begin{figure}[!h]
\centering
\includegraphics[width=\linewidth]{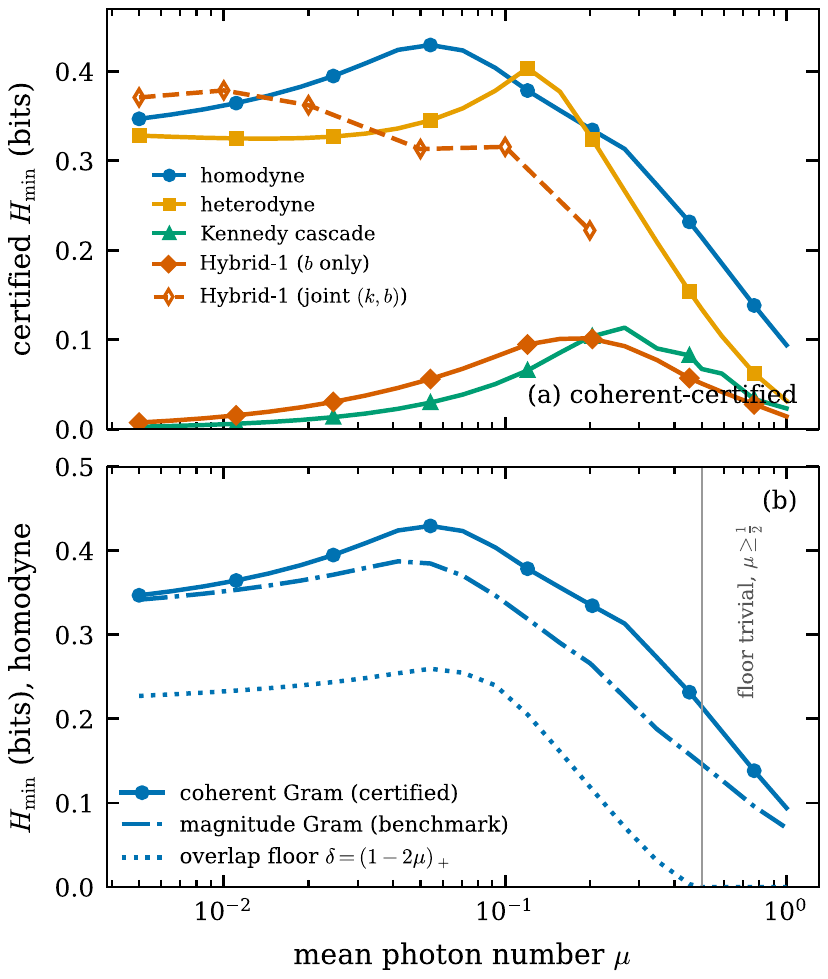}
\caption{Certified min-entropy for ternary PSK, ideal detectors; $B$ is the complete retained output of each receiver. (a)~Coherent-certified model: the four receiver families (solid, $d$ as in Table~\ref{tab:inner-outer-gap}) and the joint-output Hybrid-1, for which $B=(K,B_{\rm click})$ ($d=9$: beacon region $\times$ click bins $\{0\},\{1\},\{\ge2\}$; $g=4$, midpoint thresholds; open markers). The joint hybrid overtakes homodyne below $\mu\approx0.02$. (b)~Homodyne under the three source models: coherent Gram (certified), magnitude Gram (independent benchmark), and equal-overlap floor $\delta=(1-2\mu)_+$, which certifies nothing for $\mu\ge\tfrac12$.}
\label{fig:hmin}
\end{figure}

\begin{figure}[!h]
\centering
\includegraphics[width=\linewidth]{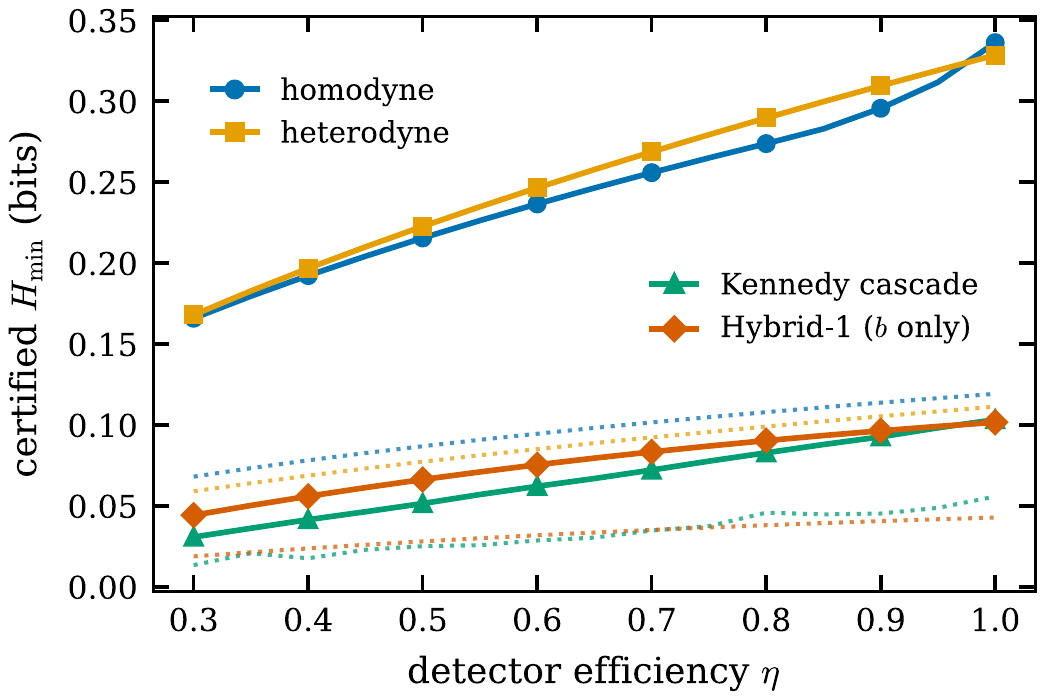}
\caption{Certified $H_{\min}$ against the effective detector efficiency $\eta$ at $\mu=0.2$, ternary PSK. Solid curves use the coherent-certified model and dotted curves the energy-floor benchmark. In the numerical model, $\eta$ is represented by the common pure-loss transformation $\alpha\rightarrow\sqrt{\eta}\alpha$ at the receiver input, while the internal receiver parameters remain fixed.}
\label{fig:eta}
\end{figure}

Certificate quality is reported rather than assumed. Figure~\ref{fig:certificates} shows the pointwise difference between the verified dual bound and the primal SDP value. Panel~(a) contains homodyne, heterodyne and coarse-grained Hybrid-1, whose differences all remain at least two orders of magnitude below the statistical width $\Delta_{b,x}$ and are therefore shown without this off-scale reference. Panel~(b) shows the Kennedy cascade separately because its nearly deterministic statistics produce larger and irregular primal--dual differences. At isolated points these differences exceed $\Delta_{b,x}$, so the reported entropy is obtained from the independently verified dual bound rather than the primal value. The joint-output Hybrid-1 is omitted from the figure but is checked through the same procedure, while Table~\ref{tab:certificates} gives the diagnostics at the reference point. The scans behind the floor model are shown in Fig.~\ref{fig:scan}.

\begin{figure}[!h]
\centering
\includegraphics[width=\linewidth]{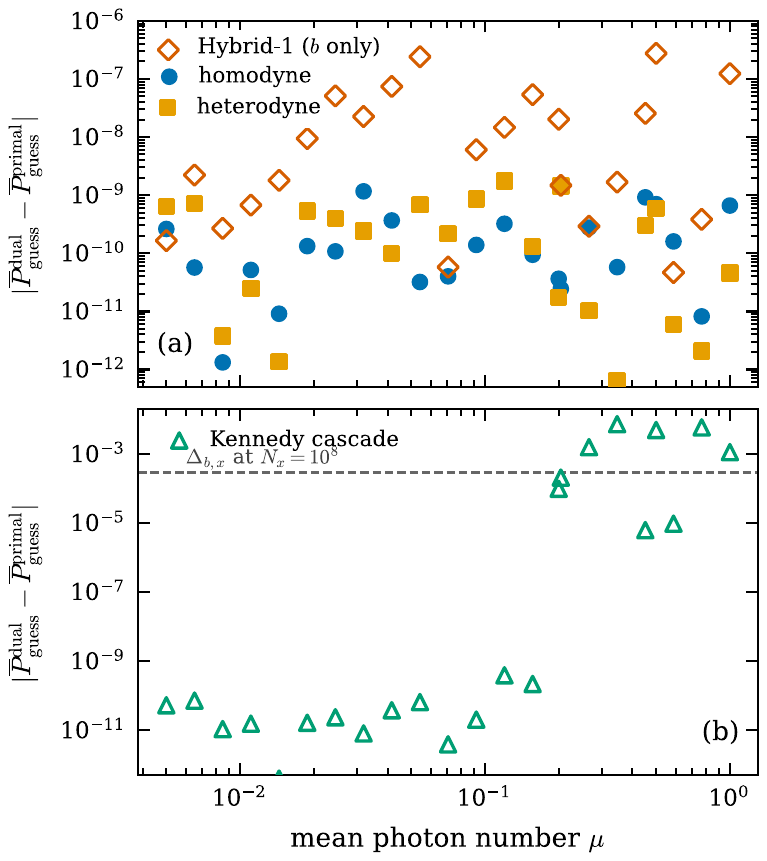}
\caption{Pointwise difference between the verified dual certificate and the primal SDP value in the coherent model. The markers are not connected because these are numerical termination diagnostics rather than a continuous physical quantity. Panel~(a) shows homodyne, heterodyne and coarse-grained Hybrid-1; all points lie at least two orders of magnitude below $\Delta_{b,x}$, which is off scale. Panel~(b) shows the Kennedy cascade together with the Hoeffding half-width at $N_x=10^{8}$ and $\varepsilon_{\rm stat}=10^{-6}$. At points above this line, the verified dual bound determines the reported entropy.}
\label{fig:certificates}
\end{figure}

\begin{figure}[!h]
\centering
\includegraphics[width=\linewidth]{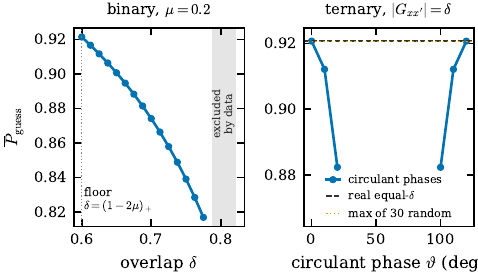}
\caption{Worst-case scans behind the energy-floor model at $\mu=0.2$ (homodyne data). (a)~Binary: $\overline P_{\rm guess}$ is non-increasing in the overlap over the feasible range, so the floor $\delta=(1-2\mu)_+$ is extremal; overlaps beyond the shaded edge are excluded by the data (no feasible realisation). (b)~Ternary at fixed modulus $|G_{xx'}|=\delta$: circulant phase families and $30$ random Hermitian phase patterns never exceed the real equal-overlap Gram.}
\label{fig:scan}
\end{figure}

\begin{table}[h]
\centering
\caption{Certification diagnostics at the reference point ($\mu=0.2$, ternary, coherent model): primal SDP value, a posteriori verified dual bound (Lemma~\ref{lem:dual}), their difference, worst eigenvalue residual of the verified certificate, and the effect of finite statistics ($N_x=10^8$, $\varepsilon_{\rm stat}=10^{-6}$). MOSEK 11.2.2, interior-point tolerances $10^{-10}$.}
\label{tab:certificates}
\begin{tabular}{lccccc}
\hline
Receiver & primal & dual cert. & gap & residual & $H_{\min}$ finite \\
\hline
Homodyne   & 0.79214 & 0.79214 & $4{\times}10^{-11}$ & $10^{-12}$ & 0.3346 \\
Heterodyne & 0.79637 & 0.79637 & $2{\times}10^{-11}$ & $10^{-13}$ & 0.3280 \\
Kennedy    & 0.93085 & 0.93095 & $1{\times}10^{-4}$ & $3{\times}10^{-6}$ & 0.0946 \\
Hybrid-1   & 0.93199 & 0.93199 & $2{\times}10^{-8}$ & $6{\times}10^{-10}$ & 0.1013 \\
\hline
\end{tabular}
\end{table}

\subsection{Receiver benchmarks}
For binary signalling the Kennedy receiver is the best practical decoder across the tested range: its mutual information exceeds two-bin homodyne by 8--16\% over $\mu\in[0.01,0.8]$ (Fig.~\ref{fig:kennedy-advantage}), with photon information efficiencies $\gamma=\lim_{\mu\to0}I/\mu$ of $\gamma_{\rm Ken}=2|t|^2=1.98$ against $\gamma_{\rm Hom}=4/(\pi\ln 2)\approx1.84$--the $8\%$ asymptotic edge growing at finite $\mu$ because of the extreme asymmetry of the Kennedy channel.

\begin{figure}[!h]
\centering
\includegraphics[width=\linewidth]{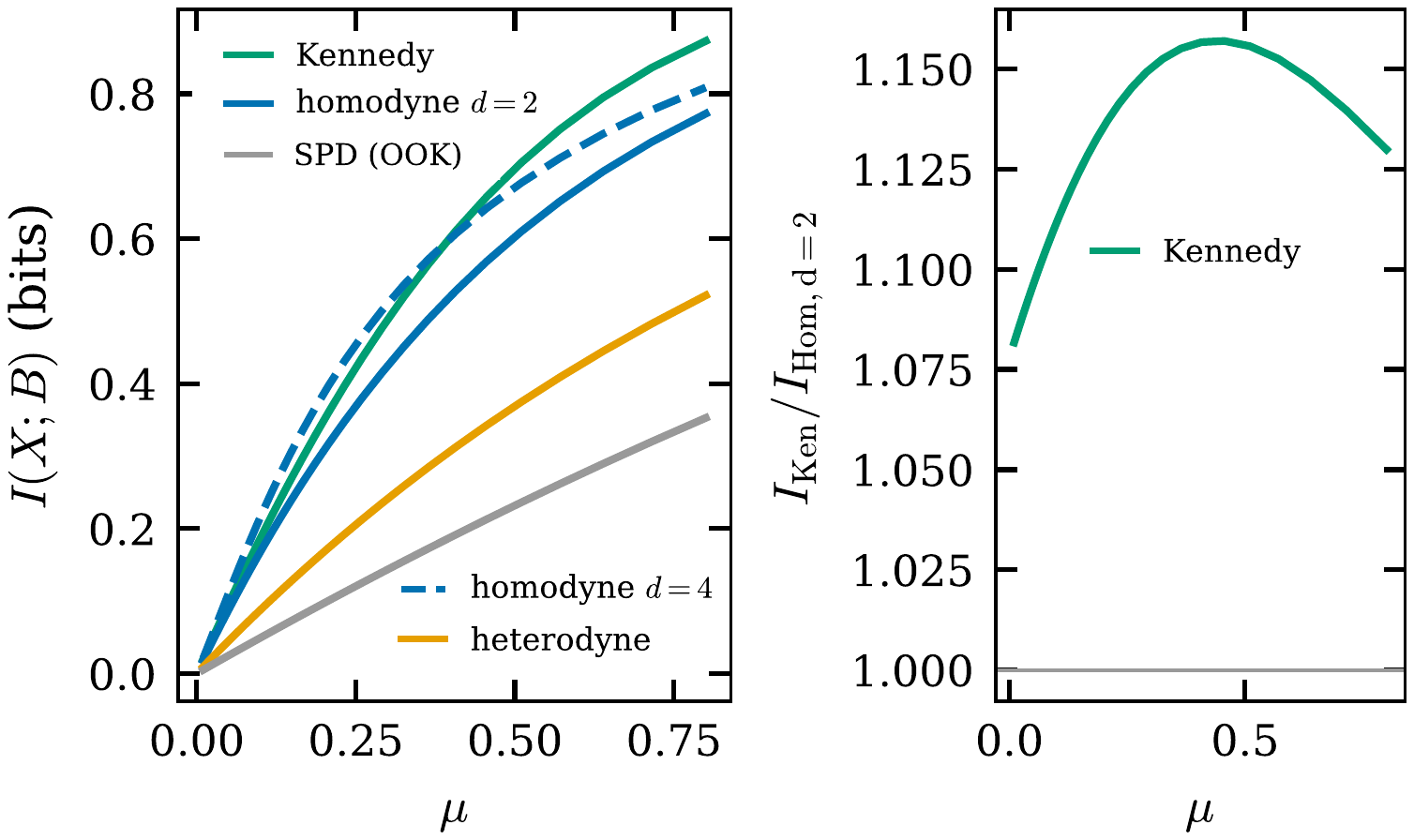}
\caption{Binary PSK. Left: $I(X;B)$ for Kennedy ($|t|^2=0.99$), homodyne with $d=2$ and $d=4$ bins, heterodyne, and SPD on--off keying (peak energy $\mu$). Right: Kennedy-to-homodyne ratio.}
\label{fig:kennedy-advantage}
\end{figure}

For ternary PSK the photon-efficiency ordering is $\gamma_{\rm Hyb}^{(k,b)}\approx1.10$ (joint output, $g=4$) $>\gamma_{\rm Hom}\approx0.97>\gamma_{\rm Het}\approx0.80>\gamma_{\rm Ken}\approx0.22$, all evaluated at $\mu=0.01$ with ideal detectors (Table~\ref{tab:gamma}, Fig.~\ref{fig:photon-efficiency}). The hybrid's edge over homodyne is $13\%$ at $\mu=0.01$ and grows to about $21\%$ as $\mu\to10^{-3}$; it needs the joint output--tracing out the beacon collapses the efficiency to $\gamma\approx0.56$--and it counts signal photons only, the displacement energy being auxiliary in the same sense as a local oscillator. The Kennedy cascade pays for its architectural degeneracy with $\gamma_{\rm Ken}\approx0.22$. For 5-PSK (Fig.~\ref{fig:5psk-hierarchy}), the joint hybrid has the largest photon efficiency at low energy, with $\gamma\approx1.12$ against $1.02$ for homodyne and $0.99$ for heterodyne at $\mu=0.01$; below $\mu\approx0.005$ the ordering of the two Gaussian receivers inverts, heterodyne approaching $0.99$ and homodyne $0.95$ as $\mu\to0$. Homodyne is the strongest non-hybrid receiver over the intermediate range until heterodyne overtakes at $\mu\approx0.73$, while the Kennedy cascade remains weaker with $\gamma\approx0.32$.

\begin{table}[h]
\centering
\caption{Photon information efficiency $\gamma_{\rm rec}=I(X;B)/\mu$ at $\mu=0.01$, ideal detectors, ternary PSK, for the stated output alphabets. The joint-output hybrid uses midpoint beacon thresholds and $g=4$.}
\label{tab:gamma}
\begin{tabular}{lc}
\hline
Receiver & $\gamma_{\rm rec}$ (bits/photon) \\
\hline
Hybrid-1, joint $(k,b)$, $g{=}4$ & 1.10 \\
Homodyne ($P$-quad, $d{=}3$)       & 0.97 \\
Heterodyne ($d{=}3$)               & 0.80 \\
Hybrid-1, $b$ only ($d{=}5$)       & 0.56 \\
Kennedy cascade ($d{=}3$, $|r|^2{=}0.1$) & 0.22 \\
\hline
\end{tabular}
\end{table}

\begin{figure}[!h]
\centering
\includegraphics[width=\linewidth]{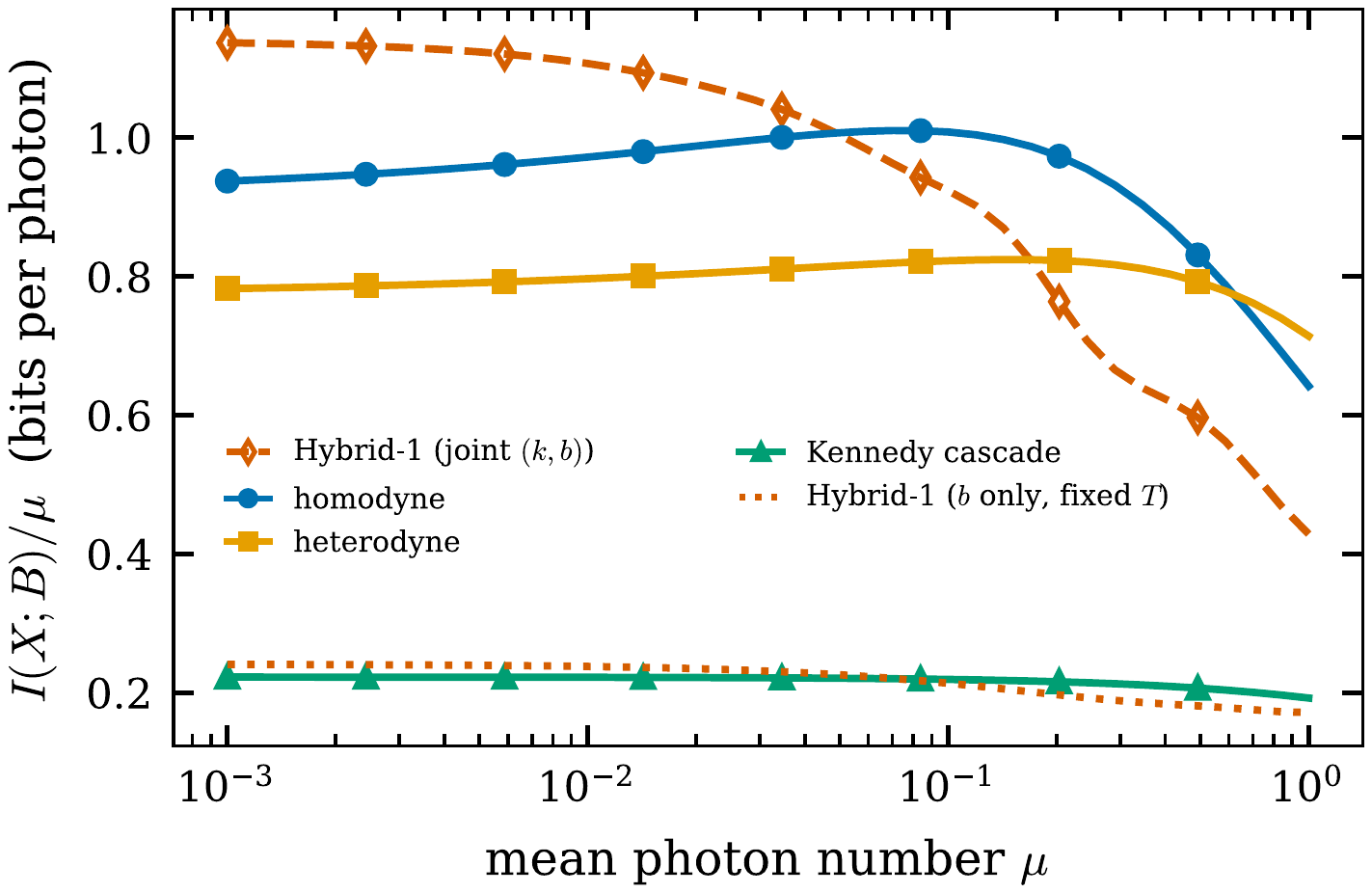}
\caption{Photon information efficiency $I(X;B)/\mu$, ternary PSK, ideal detectors. The joint-output hybrid ($g=4$, adaptive thresholds) is the only receiver above homodyne at low energy; the same receiver with the beacon label traced out (dotted) is not competitive, which quantifies the price of discarding the beacon record.}
\label{fig:photon-efficiency}
\end{figure}

\begin{figure}[!h]
\centering
\includegraphics[width=\linewidth]{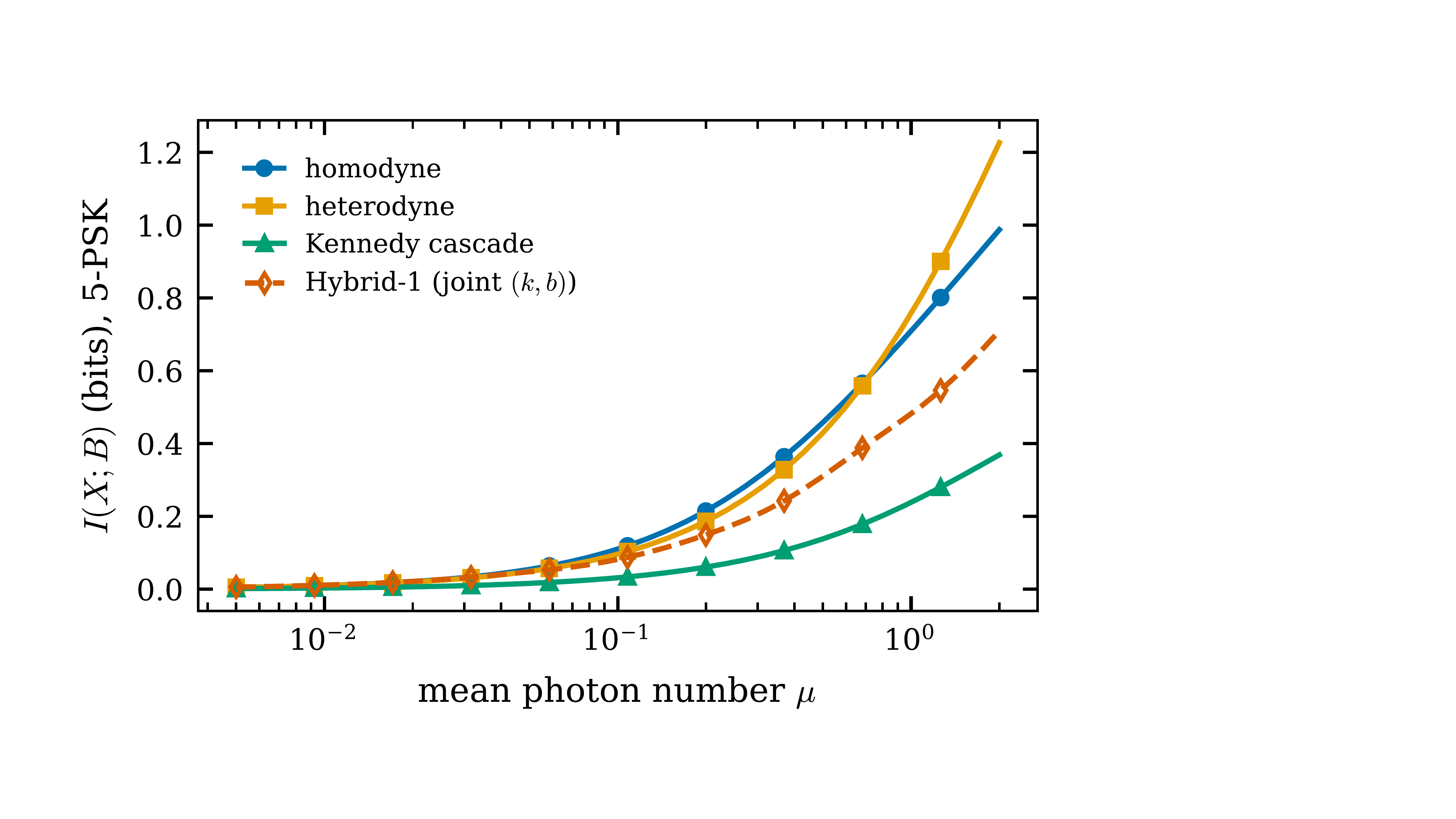}
\caption{$I(X;B)$ for 5-PSK. The joint-output hybrid has the largest slope in the extreme low-energy regime, after which homodyne with $d=5$ midpoint bins becomes dominant. Heterodyne with $d=5$ angular sectors overtakes homodyne at $\mu\approx0.73$, while the Kennedy cascade suffers the same one-null-per-stage limitation as in the ternary case.}
\label{fig:5psk-hierarchy}
\end{figure}

\section{Receiver-level implications}\label{sec:applications}
The object every protocol below inherits is the measured channel $P(b|x)$ under the stated fixed-Gram source model, so the receiver hierarchy can be used as a receiver-level benchmark; where a statement is phrased as key rate, covert throughput or forgery probability, it is a receiver benchmark induced by $P(b|x)$ under this paper's adversarial model, not a composable end-to-end proof, except where a full protocol analysis exists. The carrier is the mutual information
\begin{equation}
I(X;B)=H(B)-H(B|X),
\qquad p(b)=\tfrac1n\textstyle\sum_x P(b|x),
\label{eq:mutual-info}
\end{equation}
and, in the energy-starved limit, the single receiver constant
\begin{equation}
I(X;B)\;\approx\;\gamma_{\rm rec}\,\mu,\qquad \mu\to 0,
\label{eq:low-mu-scaling}
\end{equation}
already tabulated in Table~\ref{tab:gamma}. The quantity $\gamma_{\rm rec}$ therefore provides a common low-energy receiver benchmark for communication-oriented applications, although it does not determine the certified randomness rate, which is obtained independently from the guessing-probability SDP.

\subsection{Discrete-modulated CV-QKD}\label{sec:app-qkd}
In DM-CV-QKD with reverse reconciliation~\cite{Leverrier2009,Ghorai2019,Lin2019,Denys2021}, the asymptotic rate takes the protocol-dependent form
$K\ge\beta I(X;B)-\chi(B;E)$. We do not derive an end-to-end secret-key rate here; instead, $I(X;B)$ and $H_{\min}(B|X,\Lambda)$ are used only as receiver-level benchmarks under the classical-$\Lambda$ model. An adversary holding quantum correlations with Bob's mode lies outside the present certification. At $\mu_A=0.5$ and $0.2$~dB/km, the joint-output hybrid overtakes homodyne at $L\approx63$~km and is 17\% ahead at $100$~km (Fig.~\ref{fig:qkd-distance}), which is the low-$\mu$ ordering of Table~\ref{tab:gamma} expressed in distance. For binary links the Kennedy receiver is preferable at every tested $\mu$ when only two outcomes are available. Proposition~\ref{prop:phase-blind} carries over unchanged: a Bob station relying on photon counting alone yields input-independent statistics for fixed-modulus phase encoding and certifies zero entropy in this model, ruling out photon-counting-only receiver simplifications.

\begin{figure}[!h]
\centering
\includegraphics[width=\linewidth]{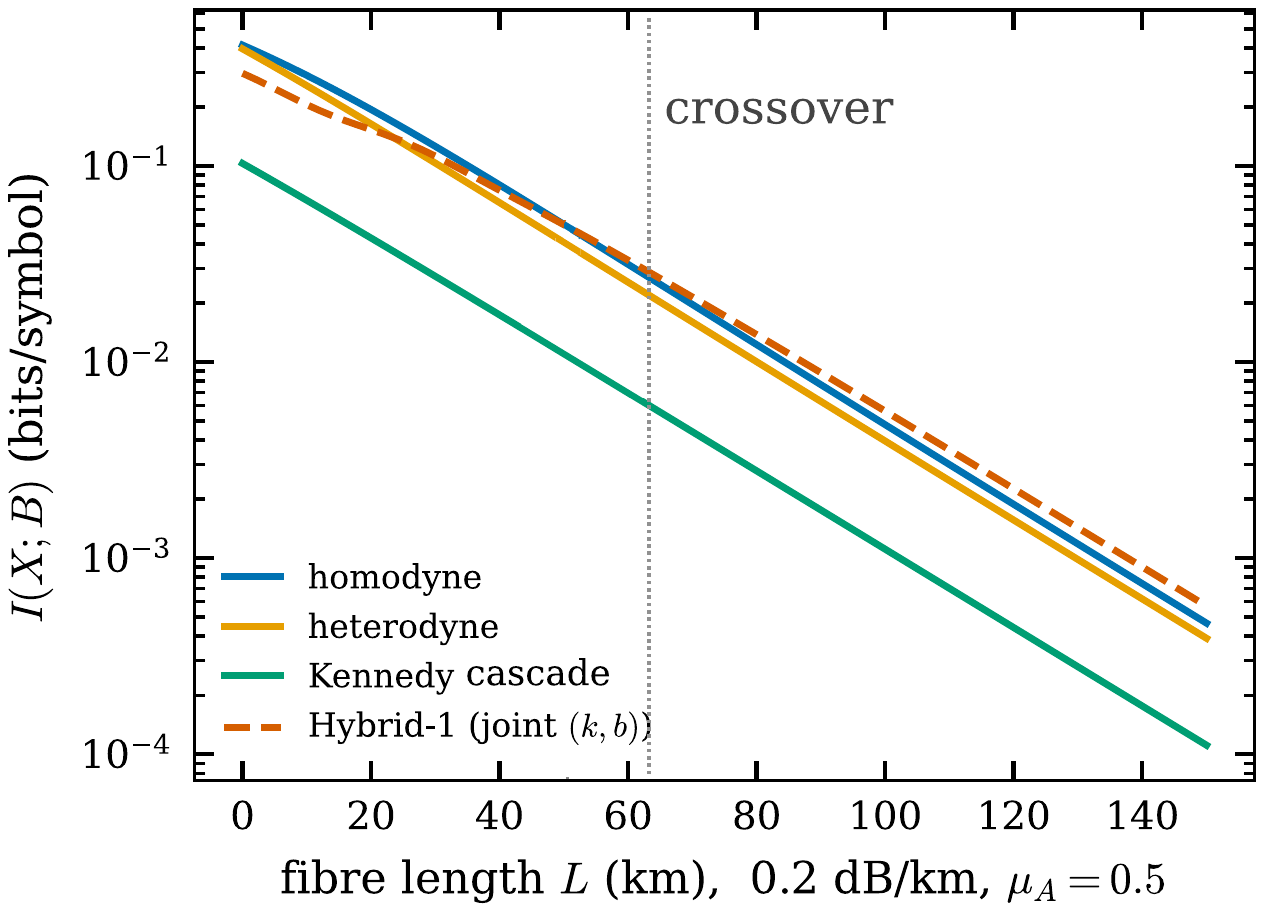}
\caption{$I(X;B)$ against fibre length for ternary PSK, $\mu_A=0.5$, $0.2$~dB/km. Homodyne dominates at short distance; the joint-output hybrid overtakes at $L\approx63$~km.}
\label{fig:qkd-distance}
\end{figure}

\subsection{Quantum reading and covert communication}
In quantum reading of a phase-encoded memory~\cite{Pirandola2011,ReadingCapacity2011} the per-probe information $I_{\rm read}(\mu)=I(X;B)$ benchmarks the probe budget: by Fano's inequality, any strategy identifying one of $n$ cells with error at most $\varepsilon$ from $N$ probe uses needs $N\,I_{\rm read}\ge(1-\varepsilon)\log_2 n-h_2(\varepsilon)$, so larger $I_{\rm read}$ is necessary for cheaper reading, while the achievable scaling in $\log(1/\varepsilon)$ is governed by the Chernoff exponent of $P(b|x)$ rather than by mutual information alone. Under a probe-energy cap the receiver ordering above applies to this benchmark verbatim, photon counting reading nothing from fixed-modulus phase cells. Covert communication over a bosonic channel obeys the square-root law $\mu\le c/\sqrt N$~\cite{Bash2015,Bullock2019}, so the total covert bits are $B\approx\gamma_{\rm rec}\,c\sqrt N$ by \eqref{eq:low-mu-scaling}: the receiver constant multiplies the covert throughput directly. At the operating point $\mu=0.01$ the joint hybrid carries 13\% more covert bits per $\sqrt N$ uses than homodyne for ternary signalling, and the Kennedy receiver the corresponding binary margin.
\begin{figure}[!h]
\centering
\includegraphics[width=\linewidth]{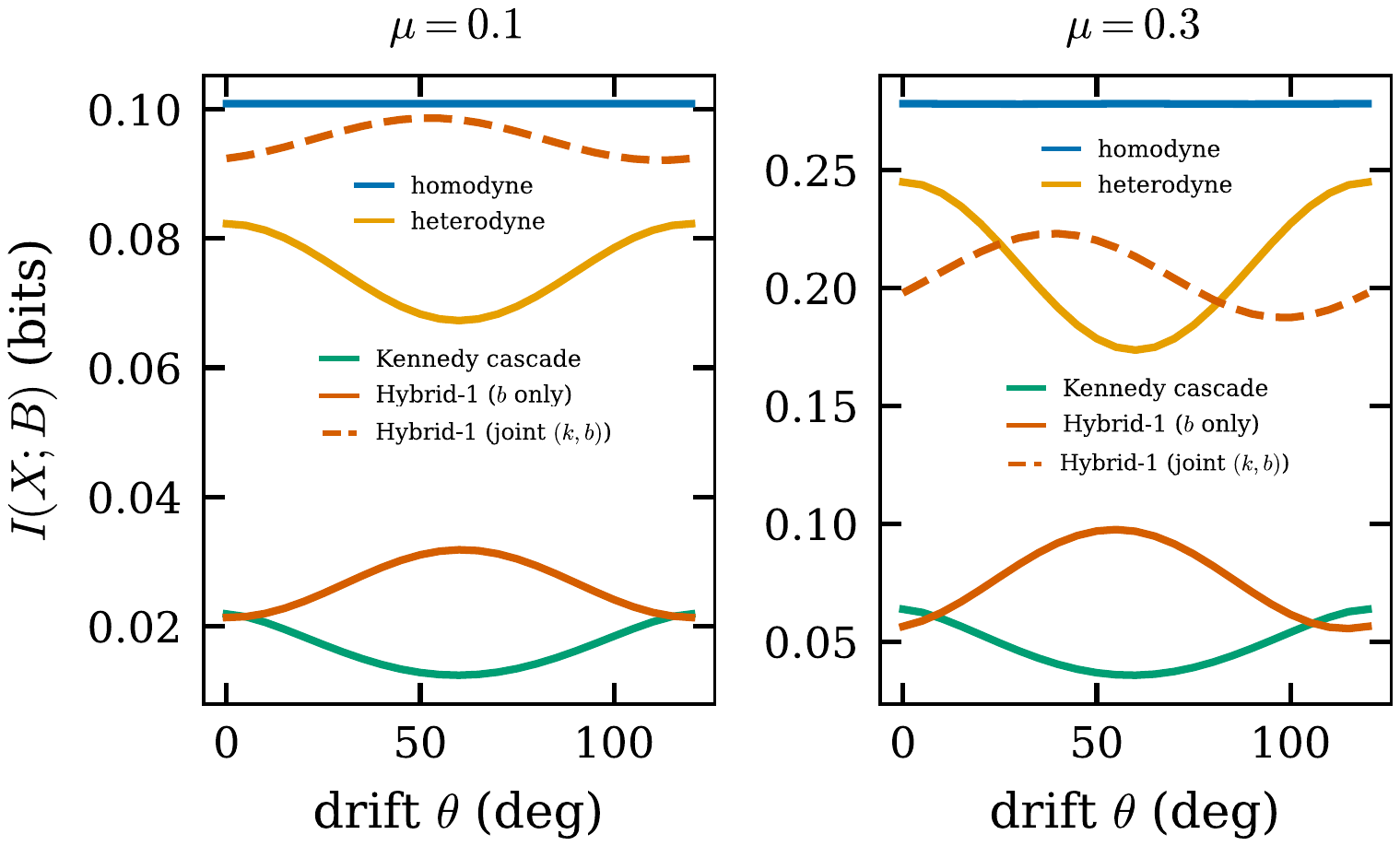}
\caption{$I(X;B)$ against LO-phase drift $\theta$ for ternary PSK at $\mu=0.1$ and $0.3$, all reference geometries frozen at their nominal ($\theta=0$) positions.}
\label{fig:phase-drift}
\end{figure}
\begin{figure*}[!t]
\centering
\includegraphics[width=\linewidth]{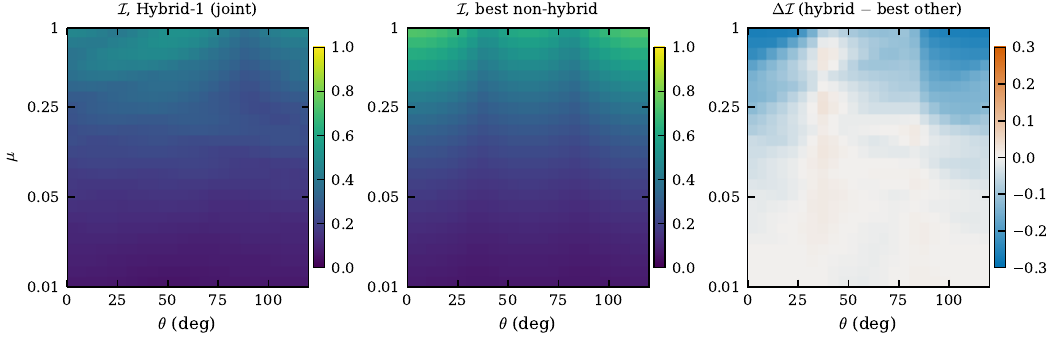}
\caption{Distinguishability index $\mathcal I$ of Eq.~\eqref{eq:distinguish-index} over $(\mu,\theta)$ for ternary PSK: joint-output Hybrid-1, best non-hybrid receiver, and their difference. The hybrid advantage is confined to low $\mu$ and survives constellation rotation.}
\label{fig:advantage_landscape}
\end{figure*}
\subsection{Quantum digital signatures}\label{sec:app-qds}
In coherent-state signature schemes~\cite{Croal2016,Thornton2019,Yin2023} the verification station measures phase-encoded coherent states. In the present model, the certified quantity $P_{\rm guess}(B|X,\Lambda)=2^{-H_{\min}(B|X,\Lambda)}$ characterises only the receiver-side predictability of the measurement outcome when the true input $X$ is supplied. We stress that this is a receiver-level predictability benchmark, not a forgery bound: converting it into a forgery probability requires the protocol-specific analysis--declaration format, mismatch thresholds, verification rule, the forger's actual information--of the scheme at hand, which we do not perform, and a forger entangled with the verification mode would in any case need $H_{\min}(B|X,E)$, which is not certified here. The calculated entropy should therefore be read only as a receiver-side randomness benchmark and does not imply an ordering of forgery difficulty across receivers. Within the present model, photon counting alone on phase-encoded states certifies no receiver-side unpredictability.

Beyond these, the certification primitive itself has uses we only note. Because the certified $H_{\min}$ and the inner--outer signature of Table~\ref{tab:inner-outer-gap} are computed from $P(b|x)$ alone, a client can benchmark whether a deployed receiver behaves consistently with a claimed channel model--the statistics cannot single out the hardware architecture, since by construction many realisations reproduce them, but an inconsistent claim is detectable; and a node whose source is certified at fixed Gram--binary energy-floor, or verified coherent PSK--plus any phase-sensitive receiver can run the scheme as a local certified-randomness service, Proposition~\ref{prop:phase-blind} acting as a hardware provisioning rule--photon-counting-only nodes cannot.

\subsection{Sensitivity to phase-reference drift}\label{sec:app-phase-drift}
A drift $\theta$ of the local-oscillator phase rotates the constellation against frozen bins, sectors, nulling phases and thresholds. For the Kennedy cascade with frozen nulling phases, the stage intensity becomes
\begin{equation}
|\gamma_{kx}(\theta)|^2
=4|r|^2\mu_k
\sin^2\!\left[\frac{2\pi(x-k)/n+\theta}{2}\right],
\end{equation}
so its response also changes with $\theta$. In mutual information the computed picture (Fig.~\ref{fig:phase-drift}) confirms the expected ordering with a twist, i.e., frozen-bin homodyne remains very close to its nominal value throughout the ternary period; at $\mu=0.3$ and $\theta=\pi/6$, its mutual information changes from $0.278021$ to $0.277896$~bits, whereas frozen-sector heterodyne loses up to $29\%$ at $\theta=60^\circ$, $\mu=0.3$ ($18\%$ at $\mu=0.1$), so at the level of $I(X;B)$ it is heterodyne that needs sector tracking. Mutual information is, however, the wrong drift metric for a randomness generator. The two metrics decouple at the merge angle: at $\mu=0.3$ and $\theta=\pi/6$, the homodyne columns for $x=0$ and $x=1$ become identical, $I(X;B)$ changes only slightly from $0.278021$ to $0.277896$~bits, while the certified $H_{\min}$ drops from $0.295$ to $0.213$~bits, because partial input-independence hands the adversary a deterministic branch on the merged pair. Drift budgets for certification must therefore be set on $H_{\min}$, where homodyne's apparent immunity does not survive. The hybrid sits between the two receivers in both metrics, with its beacon providing the phase-sensitive record needed for adaptive tracking. The Kennedy cascade also degrades under drift when the nulling phases are fixed. Figure~\ref{fig:advantage_landscape} maps the full $(\mu,\theta)$ landscape of the calibrated index for the joint hybrid against the best non-hybrid receiver.

At experimentally realistic parameters ($\eta=0.85$, $p_{\rm dc}=10^{-5}$, $N_x=10^8$, $\mu=0.2$, ternary), homodyne certifies $0.283$~bits against $0.336$~bits in the ideal case, with a further finite-statistics reduction of $1.6\times10^{-3}$~bits. Figure~\ref{fig:eta} shows that heterodyne is slightly higher at this transmissivity. The effect of energy calibration on the certified entropy is evaluated separately at fixed observed data, as discussed above.

\section{Conclusion}\label{sec:conclusion}
We have benchmarked five optical receiver families through the conditional laws $P(b|x)$ they generate, using semi-device-independent randomness certification as the working demonstration. For fixed-Gram source models, the adversarial guessing problem is an exact SDP, including the complex coherent Gram for $n\ge3$ through the block-real embedding, and every reported value is supported by an independently checked dual certificate. The coherent-certified model provides the main results, while the magnitude Gram is treated as an independent benchmark and the energy-derived overlap floor gives a certified worst case for binary alphabets. The model-free energy-only treatment remains valid but becomes too weak at practical Fock truncations. Since the fixed-Gram program is exact, the difference between honest and adversarial guessing measures the value of classical side information rather than numerical relaxation slack, reaching about half a bit for the phase-sensitive ternary receivers at $\mu=0.2$.

The receiver ordering depends strongly on both the signal energy and the retained output alphabet. Photon counting is phase blind for fixed-modulus PSK and therefore certifies no randomness without an additional phase-sensitive element. Homodyne with MAP binning gives the largest certified entropy at moderate energy, reaching $0.336$~bits per round at $\mu=0.2$ in the coherent model and remaining comparatively robust to loss. At low energy, the joint-output hybrid overtakes homodyne under the nominal source calibration when the beacon-region label is retained. The advantage survives finite statistics at $N_x=10^8$, although the remaining margin is comparable to the change caused by a $1\%$ energy-calibration uncertainty. Removing this label eliminates both the certified-entropy and photon-efficiency advantages. The same conditional laws provide receiver-level comparisons for discrete-modulated CV-QKD, quantum reading, covert communication and quantum-signature verification, although they do not replace the protocol-specific security analyses required in those settings. The phase-drift results further show that mutual information and certified entropy need not respond in the same way, so stability requirements for randomness certification must be set directly on $H_{\min}$.
The security analysis is restricted to classical side information, and the multi-round extraction statement assumes collective i.i.d.\ attacks. Extending the certification to quantum side information or devices with memory would require an instrument-based hierarchy, entropy accumulation or a related argument. The $n\ge3$ energy-floor values remain worst-case benchmarks because the phase scans support, but do not prove, the extremality of the real equal-overlap Gram. The results for applications other than randomness should likewise be read as receiver benchmarks derived from $P(b|x)$ rather than composable protocol proofs. Within these assumptions, the same optimisation and certificate procedure can be applied directly to other optical receivers.
\newpage
\bibliography{ref_updated}

\end{document}